\documentclass[preprint,12pt,authoryear]{elsarticle}

\usepackage{amssymb}
\usepackage{amsthm}
\usepackage{graphicx}
\usepackage{marvosym}
\usepackage{amsmath}
\usepackage{amsfonts}
\usepackage{pdfsync}
\newtheorem{definition}{Definition}
\newtheorem{theorem}{Theorem} 
\newtheorem{lemma}{Lemma}     

\newtheorem{remark}{Remark}

\journal{Computers \& Industrial Engineering}

\begin{document}

\begin{frontmatter}



\title{Facility Location Game with Envy Ratio}

\author[a]{Yuan Ding}\ead{dingyuan@ouc.edu.cn}
\author[a]{Wenjing Liu\corref{cor1}}\ead{liuwj@ouc.edu.cn}
\author[a]{Xin Chen}\ead{cxin0307@163.com}
\author[a]{Qizhi Fang}\ead{qfang@ouc.edu.cn}
\author[a]{Qingqin Nong}\ead{qqnong@ouc.edu.cn}

\cortext[cor1]{Corresponding author.}


\address[a]{School of Mathematical Sciences, Ocean University of China, Qingdao 266100, China}

\begin{abstract}
We study the one-facility location game on a real line with a new objective called envy ratio. The envy ratio, which is adopted from fair division and represents the egalitarianism, is defined as the maximum over the ratios between any two agents' utilities. We are interested in strategyproof or group strategyproof mechanisms that can minimize the envy ratio objective.

We consider the model in two settings that can capture natural scenarios: the facility location and all the agents' locations are restricted on a fixed interval; every agent's location can be any point on the real line but the facility location is restricted on a relative interval. In both settings, we obtain the optimal solution and the best deterministic strategyproof mechanism which is also group strategyproof. In the first setting, we provide a lower bound for randomized strategyproof mechanisms. In the second setting, we give a lower bound and two upper bounds for randomized strategyproof mechanisms.
\end{abstract}



\begin{keyword}
approximate mechanism design \sep facility location \sep fairness \sep envy ratio \sep strategyproof




\end{keyword}

\end{frontmatter}


\section{Introduction}

\textit{Approximate mechanism design without money} for facility location games has been extensively studied in the last ten years since the pathbreaking contribution of \citet{Procaccia2009}. In the basic setting, the social planner (e.g., the local government) plans to design a mechanism which locates a facility (e.g., a library or a bus station) based on the reported locations from self-interested agents. Every agent has a connection cost which is usually her distance from the facility. On the one hand, the social planner wants to optimize a certain objective, such as minimizing the sum of individual costs (i.e., the social cost). On the other hand, every agent who holds her location as private information, is self-interested and can misreport to decrease her own connection cost. Naturally, the social planner needs to design mechanisms which can optimize the objective while ensuring every agent's truthful report (i.e., strategyproof or group strategyproof).

Besides physical aspects, the facility location setting can be extended to many other applications. For example, every agent's location can represent her qualitative political view and the facility location can be a political decision. In the foregoing scenarios, money or payment cannot be used as a medium of compensation due to ethical or legal considerations \citep{Schummer2007}. Moreover, considering that the optimal solution may be not strategyproof, the goal of the social planner becomes to design strategyproof mechanisms without money that have small approximation ratios. Our work belongs to the research agenda of approximate mechanism design without money.

Most of the existing work studies the objectives of minimizing the social cost (or equivalently, maximizing the social welfare) or minimizing the maximum cost \citep{Procaccia2009,Lu2009,Lu2010,Cheng2013,FT2014,Zhang2014,SV2014,SV2015,Zou2015,Yuan2016,Feigenbaum2017,Chen2018,Fong2018,Duan2019,Li2019,Mei2019}. From the perspective of economics, the social cost objective can represent the utilitarianism. By contrast, the maximum cost objective is often considered to be more fair and can represent the egalitarianism, which is more important than the utilitarianism in many environments.

In this paper, we consider the facility game with a new objective called \textit{envy ratio}, which can also represent the egalitarianism and is a more direct fairness criterion than the maximum cost. Our objective of the envy ratio is motivated by fair division, which has been a central topic in the economy theory. Several concepts of fairness have been suggested and one of them is envy-free allocation, which means that every player prefers her own share to the share of any other player \citep{F1967,V1974}. \citet{Lipton2004} studied the problem of allocating indivisible goods with minimum possible envy from an algorithmic perspective and one objective of the minimization problems is the envy ratio. Analogically, in the facility location setting, we define the \textit{envy ratio} as the maximum over the ratios between any two agents' utilities.

We study the one-facility location game on a real line with the objective of minimizing the envy ratio.
Our goal is to design strategyproof or group strategyproof mechanisms with small approximation ratios.
\subsection{Our Results}

This paper studies strategyproof or group strategyproof approximation mechanisms for the one-facility location game on a real line with the objective of minimizing the envy ratio.
We consider the problem in two settings which can capture many real life scenarios. In the first setting, the locations of all the agents and the single facility are restricted on a fixed interval. This setting models the scenarios where the facility location and the location preferences of agents are constrained by some physical boundary. For example, when planning to build a library for some community, it is reasonable to assume that the location of the library and all the community members' location preferences are in the community. In the second setting, every agent's location can be any point on the real line but the facility location is restricted by the reports of the agents. This setting models the scenarios where the facility should not be too far away from every agent due to some consideration.

Our key innovation and results are summarized as follows.

In Section \ref{pre}, we formulate the one-facility location game with envy ratio. To the best of our knowledge, it is the first attempt to consider the envy ratio as the objective function of the facility location games. Our work contributes in the fields of approximate mechanism design without money and fair allocation.

In Section \ref{fixed}, we study the facility location setting where every agent's location and the facility location are restricted on a fixed interval. We obtain the optimal solution which is not strategyproof. For the deterministic case, we give a lower bound of 2 for any strategyproof mechanism and design a group strategyproof mechanism with approximation ratio of 2, which implies that the best deterministic strategyproof mechanism has been given. For the randomized case, we show a lower bound of 1.0314 for any strategyproof mechanism.

In Section \ref{relative}, we study the setting where every agent's location can be any point on the real line but the facility location is restricted on an interval related to the location profile $\mathbf{x}$, denoted as $[lm(\mathbf{x})-\beta L(\mathbf{x}),rm(\mathbf{x})+\beta L(\mathbf{x})]$. Here, $\beta>0$ is a parameter, $lm(\mathbf{x})$ is the leftmost point of $\mathbf{x}$, $rm(\mathbf{x})$ is the rightmost point of $\mathbf{x}$ and $L(\mathbf{x})=rm(\mathbf{x})-lm(\mathbf{x})$ is the length of $\mathbf{x}$. In this setting, we also provide the optimal solution which is not strategyproof. For the deterministic case, we show a lower bound of $1+1/\beta$ for any strategyproof mechanism and provide a group strategyproof mechanism with approximation ratio of $1+1/\beta$, which implies that the best deterministic strategyproof mechanism has been obtained. For the randomized case, when $\beta>1/2$, we give a lower bound of $1+\frac{\displaystyle 2\beta-1}{\displaystyle 8\beta^2(1+\beta)}$ for any strategyproof mechanism. As for the upper bound, we propose a randomized group strategyproof mechanism with approximation ratio of $\displaystyle 1+1/(2\beta)$. Furthermore, if $\beta\ge 1$, we show a randomized strategyproof mechanism which has a smaller approximation ratio of $\displaystyle 1+2/(1+\beta)^2$.

\subsection{Related Work}\label{related}

Mechanism design for the facility location game has a rich history of research. \citet{Moulin1980} characterized all strategyproof, efficient and anonymous mechanisms for the facility location game with the single peaked preference on the real line. \citet{Schummer2002} provided a complete characterization of strategyproof mechanisms on other networks. However, these works focus on the characterization of strategyproof mechanisms and do not consider the optimizations or approximations over a certain objective.

Approximate mechanism design without money for facility location games was formally initiated by \citet{Procaccia2009}. They studied the facility location game on the real line with the social cost objective and the maximum cost objective in three settings. In the one-facility setting, they gave the optimal and group strategyprooof mechanism for the social cost objective and the best strategyproof mechanism for the maximum cost objective. For the social cost objective, they gave an upper bound of $n-2$ and a lower bound of 3/2 for deterministic strategyproof mechanisms in the two-facility setting. They also studied strategyproof mechanisms in the two-facility setting with the maximum cost objective and in the setting of locating one facility but every agent having multiple locations.

Since then, approximate strategyproof mechanism design for facility location games has been well studied. For the two-facility location game with the social cost objective, \citet{Lu2009} provided an upper bound of $n/2$ and a lower bound of 1.045 for randomized strategyproof mechanisms. \citet{Lu2010} improved the lower bound for deterministic strategyproof mechanisms to $(n-1)/2$ and proposed a randomized strategyproof 4-approximation mechanism. \citet{FT2014} gave an elegant characterization of deterministic strategyproof mechanisms, which provided a lower bound of $n-2$.

Many variants of the problem have also been proposed to accommodate more scenarios. For the one-facility location game, \citet{Cheng2013} proposed an obnoxious facility game on networks where every agent wants to stay far away from the facility. \citet{Zhang2014} extended the model to games with weighted agents on a line segment. Further, \citet{Feigenbaum2015} and \citet{Zou2015} studied the dual preference game where some agents want to stay close to the facility while the others want to stay away from the facility. In addition, \citet{Mei2019} introduced a happiness factor to measure the agent's degree of satisfaction for the facility location and \citet{Li2019} studied the facility game with externalities where every agent's utility is affected by others. For the two-facility location game, research has been extended from the original homogeneous facility setting to the heterogeneous facility setting where facilities serve different purposes \citep{SV2014,SV2015,Zou2015,Yuan2016,Fong2018,Duan2019}.

The concerned objectives in the foregoing literature fall into two major categories: the social cost and the maximum cost, with the only exception of the work of \citet{Cai2016}, which has motivated our work in some sense. They studied the facility location game with the objective of minimizing the maximum envy. The maximum envy, which is also adopted from the fair division literature \citep{Lipton2004}, is defined as the maximum difference over all the agents' distances from the facility normalized by the length of the location profile. Considering that the optimal value is always 0 for instances with only two different locations, they use an additive approximation to measure the performance of approximate mechanisms. They generalized the classic LRM mechanism proposed by \citet{Procaccia2009} and gave a class of strategyproof mechanisms, the best of which has performance arbitrarily close to the optimal solution.

\section{Preliminaries}
\label{pre}

Let $N=\{1,2,\cdots,n\}$ be a set of agents, where each agent $i$ has a location $x_i\in \mathbb{R}$, which is $i$'s private information. If the facility is located at $y\in \mathbb{R}$, the cost of agent $i\in N$ is defined as the distance between the facility and agent $i$, that is, $cost(y,x_i)=|x_i-y|$.

We refer to the collection $\mathbf{x}=(x_1,x_2,\cdots,x_n)\in \mathbb{R}^n$ as a location profile or an instance. For $i\in N$, let $\mathbf{x}_{-i}=(x_1,\cdots,x_{i-1},x_{i+1},\cdots,x_n)$, then $\mathbf{x}=(x_i,\mathbf{x}_{-i})$. For a nonempty subset $S$ of $N$, let $\mathbf{x}_S={(x_i)}_{i\in S}$, $\mathbf{x}_{-S}={(x_i)}_{i\notin S}$, then $\mathbf{x}=(\mathbf{x}_S,\mathbf{x}_{-S})$.
For a location profile $\mathbf{x}$, if there are $K$ different locations $x_1,\cdots,x_K$ and $N$ can be partitioned into $K$ coalitions $N_1,\cdots,N_K$ such that all agents in $N_i$ occupy a same location $x_i$, $\mathbf{x}$ is referred as a $K$-location instance. We denote such an instance as $(x_1:N_1,\cdots,x_K:N_K)$ \citep{FT2014}.

\textbf{Mechanisms.} A deterministic mechanism is a function $f: \mathbb{R}^n\rightarrow \mathbb{R}$, which maps a location profile to a facility location. A randomized mechanism is a function which maps a location profile to a probability distribution over the facility locations. Formally, a randomized mechanism is a function $f: \mathbb{R}^n\rightarrow \Delta(\mathbb{R})$, where $\Delta(\mathbb{R})$ is the set of probability distributions over $\mathbb{R}$.

Given a deterministic (or randomized) mechanism and an instance $\mathbf{x}\in \mathbb{R}^n$, the cost of agent $i\in N$ is $cost(f(\mathbf{x}),x_i)=|f(\mathbf{x})-x_i|$ (or $\mathbb{E}_{Y\sim f(\mathbf{x})}|Y-x_i|$).

\textbf{Strategyproofness and (partial) Group Strategyproofness.} A mechanism $f$ is \textit{strategyproof} if no agent can benefit from misreporting her location, regardless of the other agents' strategies. Formally, for every location profile $\mathbf{x}\in \mathbb{R}^n$, every agent $i\in N$, and every $x_i'\in \mathbb{R}$, $cost(f(x_i',\mathbf{x}_{-i}),x_i)\ge cost(f(\mathbf{x}),x_i)$. A mechanism $f$ is \textit{group strategyproof} if for any coalition of agents misreporting their locations, at least one of them can not benefit. Formally, for every location profile $\mathbf{x}$, every coalition of agents $S\subseteq N$, and every $\mathbf{x}_S'\in \mathbb{R}^{|S|}$, there exists some agent $i\in S$ such that $cost(f(\mathbf{x}_S',\mathbf{x}_S),x_i)\ge cost(f(\mathbf{x}),x_i)$. A mechanism is \textit{partial group strategyproof} if for any coalition of agents that occupy the same location, none of them can benefit from misreporting their locations simultaneously. Formally, for every location profile $\mathbf{x}$, every coalition of agents $S$ occupying a same location $x$, and every $\mathbf{x}_S'\in \mathbb{R}^{|S|}$, $cost(f(\mathbf{x}_S',\mathbf{x}_{-S}),x)\ge cost(f(\mathbf{x}),x)$.

\begin{remark}
By the definition, any \textit{group strategyproof} mechanism is also \textit{partial group strategyproof}, and any \textit{partial group strategyproof} mechanism is also \textit{strategyproof}. Furthermore, it has been showed that any \textit{strategyproof} mechanim is also \textit{partial group strategyproof} in the facility location games \citep{Lu2010}. Thus, we will not distinguish between \textit{strategyproof} and \textit{partial group strategyproof} in the following analysis.
\end{remark}

\textbf{Envy Ratio.} We are interested in minimizing the \textit{envy ratio} of all agents. For a location profile $\mathbf{x}$, the \textit{envy ratio} of the facility location $y$ is defined as

\begin{equation}
ER(y,\mathbf{x})=\max_{1\le i\neq j\le n}\frac{u(y,x_i)}{u(y,x_j)},
\end{equation}
where $u(y,x_i)$ is the utility of agent $i$ with respect to $y$ and will be defined specifically in the next two sections. Notice that $ER(y,\mathbf{x})\ge 1$.

For a location profile $\mathbf{x}$, let $OPT(\mathbf{x})$ be the optimal solution to the optimization problem $min_y ER(y,\mathbf{x})$ and $ER(OPT,\mathbf{x})$ be the optimal envy ratio.

The \textit{envy ratio} of a deterministic (or randomized) mechanism $f$ is defined as $ER(f,\mathbf{x})=ER(f(\mathbf{x}),\mathbf{x})$ (or $\mathbb{E}_{Y\sim f(\mathbf{x})}ER(Y,\mathbf{x})$).

Next, we give the definition of \textit{approximation ratio} which was advocated by \citet{Procaccia2009} to measure the performance of a mechanism, and some notations which will be used in the following analysis.

\textbf{Approximation Ratio.} A mechanism $f$ is said to have an approximation ratio of $\rho$ ($\rho\ge 1$), if it satisfies
\begin{equation}
\rho=\sup_{\mathbf{x}} \frac{ER(f,\mathbf{x})}{ER(OPT,\mathbf{x})}.
\end{equation}

In this paper, our goal is to design strategyproof or group strategyproof mechanisms with minimum possible approximation ratios.

\textbf{Notations.} For a location profile $\mathbf{x}$, denote $lm(\mathbf{x})=\min_{i\in N}x_i$ which is the leftmost point of $\mathbf{x}$, $rm(\mathbf{x})=\max_{i\in N}x_i$ which is the rightmost point of $\mathbf{x}$, $mid(\mathbf{x})=1/2(lm(\mathbf{x})+rm(\mathbf{x}))$ which is the midpoint of $\mathbf{x}$ and $L(\mathbf{x})=rm(\mathbf{x})-lm(\mathbf{x})$ which is the length of $\mathbf{x}$.

\section{Envy Ratio on a Fixed Interval}
\label{fixed}

In this section, we restrict every agent $i$'s location $x_i$ and the facility location $y$ on a fixed interval $[0,L]$. The utility of agent $i\in N$ is defined as $u(y,x_i)=L-cost(y,x_i)$.

For a location profile $\mathbf{x}\in{[0,L]}^n$, the envy ratio of the facility location $y\in [0,L]$ can be written as
\begin{equation}
ER(y,\mathbf{x})=\max_{1\le i\neq j\le n}\frac{L-cost(y,x_i)}{L-cost(y,x_j)}.
\end{equation}

We will give an accurate characterization of the optimal solution for the envy ratio.

\begin{lemma}\label{mid}
Given a location profile $\mathbf{x}$, the facility location $mid(\mathbf{x})$ minimizes the envy ratio.
\end{lemma}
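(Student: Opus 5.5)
Write $l=lm(\mathbf{x})$, $r=rm(\mathbf{x})$ and $m=mid(\mathbf{x})$. The plan is to reduce the envy ratio to a ratio of two one-dimensional quantities, the maximum and the minimum utility, and then compare an arbitrary $y$ with $m$ directly.

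\textbf{Step 1 (reduction).} For any $y\in[0,L]$,
\[
ER(y,\mathbf{x})=\frac{\max_i u(y,x_i)}{\min_j u(y,x_j)}=\frac{L-\min_i |y-x_i|}{L-\max_j |y-x_j|}.
\]
Each utility is positive whenever $\max_j|y-x_j|<L$. The degenerate case, where $\{l,r\}=\{0,L\}$ and $y$ sits at an endpoint, gives a zero denominator, so $ER=\infty$; we exclude it, or treat it as $+\infty$, since $m$ never produces it. (If $n\ge 2$ and all agents coincide, then $y=m$ gives ratio $1$, which is optimal.)

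\textbf{Step 2 (the two quantities).} Write $D(y)=\max_j|y-x_j|=\max(y-l,\,r-y)$ and $d(y)=\min_i|y-x_i|$. Then
\[
ER(y,\mathbf{x})=\frac{L-d(y)}{L-D(y)}.
\]
At $y=m$ we have $D(m)=(r-l)/2$, which is the global minimum of $D$ over all $y$. Suppose $y\neq m$, say $y>m$ by symmetry, and set $\delta=y-m>0$. Then $D(y)=D(m)+\delta$, so the denominator drops by exactly $\delta$. The numerator can increase by at most $\delta$, because $d$ is $1$-Lipschitz and hence $d(y)\ge d(m)-\delta$.

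\textbf{Step 3 (comparison).} Let $A=L-d(m)$ and $B=L-D(m)$, and note that $A\ge B>0$. For $\delta<B$,
\[
ER(y,\mathbf{x})\ge \frac{L-d(m)+\delta'}{L-D(m)-\delta}
\]
for some $\delta'\ge-\delta$. Since the numerator is nonincreasing in $d$, the worst case gives the lower bound $\frac{A-\delta}{B-\delta}$. Because $A\ge B$, the function $t\mapsto (A-t)/(B-t)$ is nondecreasing on $[0,B)$, as its derivative has numerator $A-B\ge 0$. Therefore
\[
ER(y,\mathbf{x})\ge \frac{A-\delta}{B-\delta}\ge \frac{A}{B}=ER(m,\mathbf{x}).
\]
If instead $\delta\ge B$, the denominator is nonpositive, and $y$ either lies outside $[0,L]$ or gives ratio $\infty$.

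\textbf{Main obstacle.} The key subtlety is the sign direction in Step 3. Lowering $d$ raises the numerator, which could in principle help $y$, and this is exactly where the monotonicity of $(A-t)/(B-t)$, which relies on $A\ge B$, is needed. The other care point is the degenerate zero-utility case.
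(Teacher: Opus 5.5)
Your argument is correct, but it takes a different route from the paper. The paper does a piecewise analysis. For a $K$-location instance $x_1<\cdots<x_K$, it splits $[0,mid(\mathbf{x})]$ into $[0,x_1)$, $\bigl(x_i,\frac{x_i+x_{i+1}}{2}\bigr)$ and $\bigl(\frac{x_i+x_{i+1}}{2},x_{i+1}\bigr)$. On each piece it writes $ER(y,\mathbf{x})$ explicitly as a ratio with denominator $L-|y-x_K|$ and checks that it decreases in $y$. It then glues the pieces by continuity and argues symmetrically on $[mid(\mathbf{x}),L]$. The result is unimodality of $ER(\cdot,\mathbf{x})$ around $mid(\mathbf{x})$, which the paper cites in each randomized lower-bound proof; it also asserts uniqueness of the minimizer.

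You instead reduce to $ER=(L-d(y))/(L-D(y))$ and use only three facts. First, $D$ grows with slope exactly $1$ away from $m$. Second, $d$ is $1$-Lipschitz. Third, $t\mapsto (A-t)/(B-t)$ is nondecreasing when $A\ge B$. This avoids any case split over the agents' positions, and it transfers verbatim to the relative-interval lemma with $L$ replaced by $(1+\beta)L(\mathbf{x})$.

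It also recovers the paper's monotonicity with no extra effort. For $m\le y_1<y_2$ you still have $D(y_2)=D(y_1)+(y_2-y_1)$, $d(y_2)\le d(y_1)+(y_2-y_1)$ and $L-d(y_1)\ge L-D(y_1)$. The same computation then gives $ER(y_2,\mathbf{x})\ge ER(y_1,\mathbf{x})$. What your weak inequality does not give directly is uniqueness when $A=B$ (two-location instances), but the lemma does not claim it.

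There is one expository slip, which does not affect validity. In Step 2 you justify the numerator bound by $d(y)\ge d(m)-\delta$. That only says the numerator can \emph{increase} by at most $\delta$, which is the irrelevant direction for a lower bound on $ER(y,\mathbf{x})$. What Step 3 actually uses ($\delta'\ge-\delta$) is the other half of the Lipschitz property, $d(y)\le d(m)+\delta$. In words, the numerator can \emph{decrease} by at most $\delta$. Correspondingly, in your closing remark it is \emph{raising} $d$ (lowering the numerator) that could help $y$. That is precisely the effect the monotonicity of $(A-t)/(B-t)$ neutralizes.
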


\begin{proof}
Let $\mathbf{x}$ be a location profile, we now consider the monotonicity of $ER(y,\mathbf{x})$ as a function of $y$.

Without loss of generality, assume that $L(\mathbf{x})>0$ and $\mathbf{x}$ is a $K$-location instance denoted by $(x_1:N_1,\cdots,x_K:N_K)$ with $x_1<x_2<\cdots<x_K$, we first consider the monotonicity of $ER(y,\mathbf{x})$ on the interval $[0,mid(\mathbf{x})]$.

For $y\in[0,mid(\mathbf{x})]$, we only need to analyze the following three cases.

\textbf{Case 1.} If $y\in[0,x_1)$, $ER(y,\mathbf{x})=\frac{\displaystyle L-|y-x_1|}{\displaystyle L-|y-x_K|}$, which will decrease as $y$ increases.

\textbf{Case 2.} If $y\in\left(x_i,\frac{\displaystyle x_i+x_{i+1}}{\displaystyle 2}\right)$ for some $i\in\{1,2,\cdots,K\}$, $ER(y,\mathbf{x})=\frac{\displaystyle L-|y-x_i|}{\displaystyle L-|y-x_K|}$. It is obvious that $ER(y,\mathbf{x})$ decreases as $y$ increases.

\textbf{Case 3.} If $y\in\left(\frac{\displaystyle x_i+x_{i+1}}{\displaystyle 2},x_{i+1}\right)$ for some $i\in\{1,2,\cdots,K\}$, $ER(y,\mathbf{x})=\frac{\displaystyle L-|y-x_{i+1}|}{\displaystyle L-|y-x_K|}$, which will decrease as $y$ increases.

Combining with the continuity of $ER(y,\mathbf{x})$ at the points $x_i$ and $\frac{\displaystyle x_i+x_{i+1}}{\displaystyle 2}$, it follows that $ER(y,\mathbf{x})$ is monotonically decreasing on $[0,mid(\mathbf{x})]$.

Through a symmetric analysis, we can obtain that $ER(y,\mathbf{x})$ is monotonically increasing on $[mid(\mathbf{x}),L]$.

It is clear that $ER(y,\mathbf{x})$ has its minimum at the point $mid(\mathbf{x})$. Furthermore, $mid(\mathbf{x})$ is the unique point that minimizes $ER(y,\mathbf{x})$ for any instance $\mathbf{x}$ with $L(\mathbf{x})>0$.
\end{proof}

Unfortunately, the facility location of $mid(\mathbf{x})$ is not strategyproof. For instance $\mathbf{x}=(L/4:N_1,L/2:N_2)$, where $N_1=\{1\}, N_2=\{2,\cdots,n\}$, agent 1 can move the facility location to her own location by reporting $0$ instead of her true location $L/4$. The facility location problem in Section \ref{relative} also faces the same challenge. Next, we will aim at seeking strategyproof or group strategyproof mechanisms with minimum possible approximation ratio. The monotonicity property of the envy ratio $ER(\cdot,\mathbf{x})$ given in the proof of Theorem \ref{mid} will be very useful for analyzing the approximation ratio of a mechanism.

\subsection{Deterministic Mechanisms}

In this subsection, we consider deterministic strategyproof mechanisms and the approximation ratio for the envy ratio.

\begin{theorem}
The approximation ratio of any deterministic strategyproof mechanism is at least 2.
\end{theorem}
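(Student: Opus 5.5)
The plan is to use two-location instances only. By Lemma \ref{mid}, for any two-location instance $\mathbf{x}$ the optimum places the facility at $mid(\mathbf{x})$, where both groups have equal utility, so $ER(OPT,\mathbf{x})=1$. It then suffices to exhibit a two-location instance on which $f$ has envy ratio at least $2$. To handle general $n$, I would partition the agents as $N_1=\{1\}$ and $N_2=\{2,\dots,n\}$. By the Remark, strategyproofness implies partial group strategyproofness, so the whole coalition $N_2$ sitting at one point may be treated as a single agent that deviates jointly. Below I therefore speak of ``the agent at $0$'' and ``the agent at $L$''.

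Step 1: let $f$ be any deterministic strategyproof mechanism and consider $\mathbf{x}=(0:N_1,\,L:N_2)$. Let $y=f(\mathbf{x})\in[0,L]$. By the left--right symmetry of the problem (reflect $t\mapsto L-t$ and swap the roles of $N_1,N_2$), I may assume $y\le L/2$.

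Step 2: pin down the facility on a second instance. Let $\mathbf{x}''=(y:N_1,\,L:N_2)$, i.e.\ agent $1$ moves from $0$ to $y$, and let $y''=f(\mathbf{x}'')$.
\begin{itemize}
\item Agent $1$ with true location $0$ in $\mathbf{x}$ must not gain by reporting $y$. Hence $|y''-0|\ge|y-0|$, that is, $y''\ge y$.
\item Agent $1$ with true location $y$ in $\mathbf{x}''$ must not gain by reporting $0$, which yields facility $y$. Hence $|y''-y|\le 0$.
\end{itemize}
Together these give $y''=y$, so the facility sits exactly at agent $1$'s location in $\mathbf{x}''$.

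Step 3: evaluate the envy ratio. In $\mathbf{x}''$ with facility at $y$, agent $1$ has utility $L$ and the agents of $N_2$ have utility $L-(L-y)=y$. Thus $ER(f,\mathbf{x}'')=L/y\ge 2$, since $y\le L/2$. If $y=0$ the ratio is unbounded, which is even better. Since $ER(OPT,\mathbf{x}'')=1$, the approximation ratio is at least $2$.

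The only delicate point is the reduction from coalitions to single agents in Step 2. This is exactly what partial group strategyproofness, supplied by the Remark, provides. If $N_2$ were the moving group instead, the same argument works, because $N_2$ all share one location. The remaining work is routine verification that both strategyproofness inequalities are applied in the correct direction.
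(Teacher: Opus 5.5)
Your proof is correct and is essentially the paper's argument in mirror image. The paper assumes $f(\mathbf{x})=L/2+\epsilon\ge L/2$ and moves the whole coalition $N_2$ from $L$ onto the facility, which uses partial group strategyproofness. You assume $y\le L/2$ and move the single agent $1$ onto the facility, which needs only ordinary strategyproofness, so the Remark is needed only in your mirrored case $y>L/2$, exactly as in the paper. In both versions the facility is pinned down by the moved group's option to revert to its old report at zero cost, so your first bullet ($y''\ge y$) is redundant.
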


\begin{proof}
Let $f$ be any deterministic strategyproof mechanism. According to the equivalence between strategyproof and partial group strategyproof in the facility location game, $f$ is also partial group strategyproof.

Consider a 2-location instance $\mathbf{x}=(0:N_1,L:N_2)$, where $N_1=\{1\},N_2=\{2,\cdots,n\}$. Note that $OPT(\mathbf{x})=L/2$ and $ER(OPT,\mathbf{x})=1$.

If $f(\mathbf{x})=0$ (or $L$), $ER(f,\mathbf{x})=+\infty$.

Otherwise, $f(\mathbf{x})\in(0,L)$. Without loss of generality, assume that $f(\mathbf{x})=L/2+\epsilon$, where $\epsilon\in[0,L/2)$. Let $\mathbf{x}'=(0:N_1,L/2+\epsilon:N_2)$. By $f$'s (partial group) strategyproofness, $f(\mathbf{x}')=L/2+\epsilon$; otherwise, agents in coalition $N_2$ with the location $L/2+\epsilon$ can benefit by misreporting to location $L$ simultaneously. Thus, $ER(f,\mathbf{x}')=\displaystyle\frac{L-0}{L-(L/2+\epsilon)}=\frac{L}{L/2-\epsilon}\ge 2$.

Anyhow, $f$ has an approximation ratio of at least 2.
\end{proof}

\begin{theorem}
$f(\mathbf{x})=L/2$ is a group strategyproof 2-approximation mechanism for the envy ratio.
\end{theorem}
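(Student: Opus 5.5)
Group strategyproofness is immediate, since $f$ is constant: the output is $L/2$ whatever the reports are. For any coalition $S$ and any misreport $\mathbf{x}_S'$ we have $f(\mathbf{x}_S',\mathbf{x}_{-S})=L/2=f(\mathbf{x})$, so no agent's cost changes and no member of $S$ strictly benefits. The real work is the approximation bound: for every $\mathbf{x}\in[0,L]^n$ we must show $ER(L/2,\mathbf{x})\le 2\,ER(OPT,\mathbf{x})$.

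For the upper bound on the mechanism I will bound each agent's utility at $y=L/2$ directly. Every $x_i\in[0,L]$ satisfies $|x_i-L/2|\le L/2$, so $u(L/2,x_i)=L-|x_i-L/2|\in[L/2,L]$. Any ratio of two utilities is then at most $L/(L/2)=2$, which gives $ER(L/2,\mathbf{x})\le 2$ for every profile. Since $ER(OPT,\mathbf{x})\ge 1$ always, as noted in the Preliminaries, the ratio is at most $2$, and the statement follows without using Lemma \ref{mid} at all.

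The only step that needs care is degenerate utilities. At $y=L/2$ every utility is at least $L/2>0$, so the ratio is well defined and finite. If $L(\mathbf{x})=0$, both sides equal $1$. Finally, the preceding theorem's instance $\mathbf{x}'=(0:N_1,L/2:N_2)$ (taking $\epsilon=0$) shows the factor $2$ cannot be improved for this mechanism, though the statement does not require this. I do not expect a genuine obstacle here; the main point is to remember that the trivial bound $ER(OPT,\mathbf{x})\ge1$ is all that is needed.
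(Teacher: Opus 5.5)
Your proposal is correct and matches the paper's proof essentially step for step: the mechanism is constant, so it is group strategyproof; every utility at $L/2$ lies in $[L/2,L]$, so $ER(f,\mathbf{x})\le 2$; and $ER(OPT,\mathbf{x})\ge 1$. One minor correction: the paper defines the approximation ratio as the supremum itself (an equality), so the tight instance $(0:N_1,L/2:N_2)$ is part of the claim rather than optional, and the paper's proof includes it; since you supplied that instance, nothing is missing.
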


\begin{proof}
$f$ is group strategyproof since it does not depend on any information from the agents. We only need to prove the mechanism has an approximation ratio of 2.

For any location profile $\mathbf{x}\in{[0,L]}^n$, $\displaystyle ER(f,\mathbf{x})\le\frac{L-0}{L-L/2}=2$ and $ER(OPT,\mathbf{x})\ge 1$. Thus, $\displaystyle\sup_{\mathbf{x}} \frac{ER(f,\mathbf{x})}{ER(OPT,\mathbf{x})}\le 2$.

Consider a 2-location instance $\mathbf{x}'=(0:N_1,L/2:N_2)$, where $N_1=\{1\}$, $N_2=\{2,\cdots,n\}$.
Note that $OPT(\mathbf{x}')=L/4$, $ER(OPT,\mathbf{x}')=1$ and $\displaystyle ER(f,\mathbf{x}')=\frac{L-0}{L-L/2}=2$, which implies that $\displaystyle\frac{ER(f,\mathbf{x}')}{ER(OPT,\mathbf{x}')}=2$.

Therefore, $f$ has an approximation ratio of 2.
\end{proof}

The above analysis demonstrates that any deterministic strategyproof mechanism has an approximation ratio of at least 2, and simply locating at the midpoint of the fixed interval is exactly the best deterministic strategyproof mechanism, which may be somewhat surprising. Then, a following question is whether the randomization can break the deterministic lower bound of 2 or not, which will be discussed in the next subsection.

\subsection{Randomized Mechanisms}

\begin{theorem}
Any randomized strategyproof mechanism has an approximation ratio of at least 1.0314.
\end{theorem}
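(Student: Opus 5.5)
We mimic the deterministic lower bound, but use expected costs and a two-instance argument. Normalize $L=1$ and work with 2-location instances $\mathbf{x}=(a:N_1,b:N_2)$ with $N_1=\{1\}$ and $N_2=\{2,\dots,n\}$. Since strategyproofness is equivalent to partial group strategyproofness (the Remark after the definitions), the coalition $N_2$ may be treated as a single agent. We may therefore reason as in a two-agent game. Write $P$ for the distribution $f(\mathbf{x})$ and $\mathbb{E}_P|Y-x|$ for an agent's expected cost.

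\textbf{Step 1 (starting instance).} Take $\mathbf{x}=(0:N_1,1:N_2)$. By Lemma \ref{mid}, $OPT(\mathbf{x})=1/2$ and $ER(OPT,\mathbf{x})=1$. If $f$ is a $\rho$-approximation with $\rho$ close to 1, then $\mathbb{E}[ER(Y,\mathbf{x})]\le\rho$ forces $P$ to be concentrated near $1/2$. For $y\le 1/2$ we have $ER(y,\mathbf{x})=\frac{1-y}{y}$, which is convex and grows roughly like $1+4|y-1/2|$ near $1/2$. So a bound of the form $\mathbb{E}|Y-1/2|\le c(\rho-1)$ follows.

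By symmetry we may assume $\mathbb{E}_P|Y-1|\ge \mathbb{E}_P|Y-0|$, i.e.\ the facility is in expectation no farther from agent 1. Then the cost of $N_2$ satisfies $\mathbb{E}_P|Y-1|\ge 1/2-\delta$ for a small $\delta$ controlled by $\rho-1$.

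\textbf{Step 2 (deviation instance).} Move $N_2$ to $\mathbf{x}'=(0:N_1,t:N_2)$ for a parameter $t\in(1/2,1)$ to be optimized. Strategyproofness for the coalition at $t$ (who could report $1$) gives $\mathbb{E}_{f(\mathbf{x}')}|Y-t|\le \mathbb{E}_P|Y-t|$. Strategyproofness for a coalition at $1$ (who could report $t$) gives the reverse inequality $\mathbb{E}_{f(\mathbf{x}')}|Y-1|\ge\mathbb{E}_P|Y-1|$.

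On $\mathbf{x}'$ the optimum is $t/2$ with $ER=1$. For $y\in[0,t]$ the envy ratio is $\frac{1-y}{1-t+y}$ when $y\le t/2$ and $\frac{1-t+y}{1-y}$ when $y\ge t/2$; outside $[0,t]$ it is monotone, as shown in the proof of Lemma \ref{mid}. Combining the two displacement constraints with the fact that $P$ was nearly concentrated at $1/2$ (which is at distance $t-1/2$ from $t$) forces $f(\mathbf{x}')$ to keep substantial mass near $1/2$ or beyond, away from $t/2$. That mass incurs envy ratio roughly $\frac{1/2+(1-t)}{1/2}$.

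\textbf{Step 3 (optimization).} We obtain two competing bounds on $\rho$. The first, from $\mathbf{x}$, says that a distribution $P$ spread out to help on $\mathbf{x}'$ pays for it. The second, from $\mathbf{x}'$, says that $P$ concentrated at $1/2$ pins $f(\mathbf{x}')$ away from $t/2$. To make the second precise, we lower bound $\min \mathbb{E}_Q[ER(Y,\mathbf{x}')]$ over distributions $Q$ with $\mathbb{E}_Q|Y-t|\le \mathbb{E}_P|Y-t|$. Using convexity of $ER(\cdot,\mathbf{x}')$ on each side of $t/2$, the worst case is reduced to two-point distributions. We then balance the two bounds in $t$ and in the mass parameter, which should yield a constant like $1.0314$, likely at an algebraic $t$ found by solving a low-degree equation.

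\textbf{Main obstacle.} The hard part is Step 3: turning a single expected-distance constraint into a lower bound on the expected envy ratio. The envy ratio is a nonlinear, piecewise-fractional function, so the extremal distribution must be identified carefully. I would first restrict attention to distributions supported on at most two points and justify this via convexity and Jensen. If one deviation instance is not enough to reach $1.0314$, I would try a symmetric pair of deviations, using both $(s:N_1,1:N_2)$ and $(0:N_1,t:N_2)$.
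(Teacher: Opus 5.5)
Your route is valid and genuinely different from the paper's. The step you flag as the main obstacle closes in a few lines, with no two-point reduction.

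\textbf{Closing Step 3.} Take $L=1$, $a=\mathbb{E}_P|Y-\tfrac{1}{2}|$, $Q=f(\mathbf{x}')$ and $b=\mathbb{E}_Q|Y-\tfrac{t}{2}|$. On $[0,1]$ there are pointwise minorants $ER(y,\mathbf{x})\ge 1+4|y-\tfrac{1}{2}|$ and, for $t\in[\tfrac{1}{2},1)$, $ER(y,\mathbf{x}')\ge 1+\tfrac{4}{2-t}|y-\tfrac{t}{2}|$. For the second, check $[0,\tfrac{t}{2}]$, $[\tfrac{t}{2},t]$ and $[t,1)$ separately. Hence $\rho\ge 1+4a$ and $\rho\ge 1+\tfrac{4}{2-t}b$. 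Your first displacement inequality and two triangle inequalities give
\[
\tfrac{t}{2}-b\le\mathbb{E}_Q|Y-t|\le\mathbb{E}_P|Y-t|\le t-\tfrac{1}{2}+a .
\]
So $\tfrac{1-t}{2}\le a+b\le\tfrac{(3-t)(\rho-1)}{4}$, i.e.\ $\rho\ge 1+\tfrac{2(1-t)}{3-t}$. At $t=\tfrac{1}{2}$, where the chain is still valid, this equals $\tfrac{7}{5}$, far above $1.0314$. If you instead use the convex minorants $\tfrac{1+2d}{1-2d}$ and $\max\{1,\tfrac{2-2d}{1+2d}\}$ in $d=|y-\tfrac{1}{2}|$, together with Jensen and $\mathbb{E}_Q|Y-\tfrac{1}{2}|\le a$, you even get $\tfrac{3}{2}$. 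One deviation instance suffices.

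\textbf{Points to fix or drop.}
\begin{itemize}
\item $ER(\cdot,\mathbf{x}')$ is not convex on $[\tfrac{t}{2},1]$: its slope falls from $\tfrac{2-t}{(1-t)^2}$ to $\tfrac{t}{(1-t)^2}$ at $y=t$. So the reduction ``via convexity and Jensen'' is wrong as stated. Two-point laws do suffice for a single moment constraint, but by an extreme-point argument, and you do not need them.
\item The ``by symmetry'' normalization would need care, since the mechanism need not be anonymous and $N_1,N_2$ play different roles. However, neither it nor the slack inequality $\mathbb{E}_Q|Y-1|\ge\mathbb{E}_P|Y-1|$ is ever used.
\item There is no reason to expect the optimum to land near $1.0314$.
\end{itemize}

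\textbf{Comparison with the paper.} The paper starts from $(L/2:N_1,3L/4:N_2)$ and uses the triangle inequality to find an agent with expected cost at least $L/8$. It moves that agent outward by $L/4$, so her true location becomes the optimum of the new instance. Strategyproofness then keeps the new distribution at expected distance at least $L/8$ from that optimum.

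That is a forcing argument on a single instance rather than a trade-off, and it needs a case split ($1.0314$ versus $17/16$). The distance bound is turned into an envy-ratio bound by a coarse three-region partition with a tuned $\delta$, which is what caps the constant at $1.0314$. A tangent-line minorant in the same instances would already give $\tfrac{4}{3}$.

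Your inward move is the randomized analogue of the paper's own deterministic proof. Strategyproofness of the coalition at $t$ couples two instances, no case distinction is needed, and the resulting bound is stronger. Both arguments rely on partial group strategyproofness for $N_2$; the paper uses it in its Case 2.
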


\begin{proof}
Let $f$ be any randomized strategyproof mechanism.
Consider a location profile $\mathbf{x}=(L/2:N_1,3L/4:N_2)$, where $N_1=\{1\},N_2=\{2,\cdots,n\}$.
According to the triangle inequality, $cost(f(\mathbf{x}),L/2)+cost(f(\mathbf{x}),3L/4)=\mathbb{E}_{Y\sim f(\mathbf{x})}[|Y-L/2|+|Y-3L/4|]\ge |L/2-3L/4|=L/4$. Then, either $cost(f(\mathbf{x}),3L/4)\ge L/8$ or $cost(f(\mathbf{x}),L/2)\ge L/8$. We will analyze the approximation ratio of $f$ through these two cases.

\textbf{Case 1.} $cost(f(\mathbf{x}),L/2)\ge L/8$. Consider $\mathbf{x}'=(L/4:N_1,3L/4:N_2)$. Then we have
\begin{equation}\label{eqn7}
cost(f(\mathbf{x}'),L/2)\ge cost(f(\mathbf{x}),L/2)\ge L/8.
\end{equation}

Otherwise, agent 1 at location $L/2$ can benefit by reporting $L/4$, which contradicts $f$'s strategyproofness.

Let $Y'$ be a random variable according to the probability distribution $f(\mathbf{x}')$.
Denote $\delta=(4-\sqrt{7})L/12$, $p_1=Pr\{Y'\in[L/2-\delta,L/2+\delta]\}$, $p_2=Pr\{Y'\in[L/4,L/2-\delta)\cup(L/2+\delta,3L/4]\}$, and $p_3=Pr\{Y'\in[0,L/4)\cup(3L/4,L]\}$, we have
\begin{eqnarray}\label{eqn8}
cost(f(\mathbf{x}'),L/2)&=&\mathbb{E}[|Y'-L/2|]\nonumber\\
&=& p_1\mathbb{E}\left[|Y'-L/2|~|Y''\in[L/2-\delta,L/2+\delta]\right]\nonumber\\
&&{}+p_2\mathbb{E}\left[|Y'-L/2|~|Y''\in[L/4,L/2-\delta)\cup(L/2+\delta,3L/4]\right]\nonumber\\
&&{}+p_3\mathbb{E}\left[|Y'-L/2|~|Y''\in[0,L/4)\cup(3L/4,L]\right]\nonumber\\
&\le&\delta p_1+L/4\cdot p_2+L/2\cdot p_3\nonumber\\
&=&\delta+(L/4-\delta)p_2+(L/2-\delta)p_3
\end{eqnarray}

Combining (\ref{eqn7}) with (\ref{eqn8}), we have
\begin{equation}\label{eqn9}
p_3\ge\frac{L/8-\delta-(L/4-\delta)p_2}{L/2-\delta}
\end{equation}

On the other hand, we have
\begin{eqnarray*}
ER(f,\mathbf{x}')&=&p_1\mathbb{E}\left[ER(Y',\mathbf{x}')|Y'\in[L/2-\delta,L/2+\delta]\right]\\
&&{}+p_2\mathbb{E}\left[ER(Y',\mathbf{x}')|Y'\in[L/4,L/2-\delta)\cup(L/2+\delta,3L/4]\right]\\
&&{}+p_3\mathbb{E}\left[ER(Y',\mathbf{x}')|Y'\in[0,L/4)\cup(3L/4,L]\right]\\
&\ge&p_1+\frac{L-(L/4-\delta)}{L-(L/4+\delta)}p_2+\frac{L-0}{L-L/2}p_3\\
&=&1+\frac{2\delta}{3L/4-\delta}p_2+p_3\\
&\ge&1+\frac{2\delta}{3L/4-\delta}p_2+\frac{L/8-\delta-(L/4-\delta)p_2}{L/2-\delta}\\
&=&1+\frac{L/8-\delta}{L/2-\delta}\thickapprox1.0314
\end{eqnarray*}
Here, the first inequality holds because $ER(\cdot,\mathbf{x}')$ monotonically decreases on $[0,L/2]$ and monotonically increases on $[L/2,L]$. The second inequality holds due to (\ref{eqn9}). The last equality holds because $\delta=(4-\sqrt{7})L/12$.

\textbf{Case 2.} $cost(f(\mathbf{x}),3L/4)\ge L/8$. The analysis of this case is similar to that of Case 1. For the completeness, we give the proof here. In this case, consider $\mathbf{x}''=(L/2:N_1,L:N_2)$. By $f$'s strategyproofness, we have that
\begin{equation}\label{eqn4}
cost(f(\mathbf{x}''),3L/4)\ge cost(f(\mathbf{x}),3L/4)\ge L/8.
\end{equation}

Let $Y''$ be a random variable according to the probability distribution $f(\mathbf{x}'')$.

Without confusion, still denote $p_1=Pr\{Y''\in[2L/3,5L/6]\}$, $p_2=Pr\{Y''\in[L/2,2L/3)\cup(5L/6,L]\}$, and $p_3=Pr\{Y''\in[0,L/2)\}$, we have
\begin{eqnarray}\label{eqn5}
cost(f(\mathbf{x}''),3L/4)&=&\mathbb{E}[|Y''-3L/4|]\nonumber\\
&=& p_1\mathbb{E}\left[|Y''-3L/4|~|Y''\in[2L/3,5L/6]\right]\nonumber\\
&&{}+p_2\mathbb{E}\left[|Y''-3L/4|~|Y''\in[L/2,2L/3)\cup(5L/6,L]\right]\nonumber\\
&&{}+p_3\mathbb{E}\left[|Y''-3L/4|~|Y''\in[0,L/2)\right]\nonumber\\
&\le&L/12\cdot p_1+L/4\cdot p_2+3L/4\cdot p_3\nonumber\\
&=&L/12+L/6\cdot p_2+2L/3\cdot p_3
\end{eqnarray}

Combining (\ref{eqn4}) with (\ref{eqn5}), we have
\begin{equation}\label{eqn6}
p_3\ge\frac{1/24-1/6\cdot p_2}{2/3}
\end{equation}

On the other hand, we have
\begin{eqnarray*}
ER(f,\mathbf{x}'')&=&p_1\mathbb{E}\left[ER(Y'',\mathbf{x}'')|Y''\in[2L/3,5L/6]\right]\\
&&{}+p_2\mathbb{E}\left[ER(Y'',\mathbf{x}'')|Y''\in[L/2,2L/3)\cup(5L/6,L]\right]\\
&&{}+p_3\mathbb{E}\left[ER(Y'',\mathbf{x}'')|Y''\in[0,L/2)\right]\\
&\ge&p_1+\frac{L-L/6}{L-L/3}p_2+\frac{L-0}{L-L/2}p_3\\
&=&1+0.25p_2+p_3\\
&\ge&1+0.25p_2+\frac{1/24-1/6\cdot p_2}{2/3}\\
&=&\frac{17}{16}=1.0625
\end{eqnarray*}
Here, the first inequality holds because $ER(\cdot,\mathbf{x}'')$ monotonically decreases on $[0,3L/4]$ and monotonically increases on $[3L/4,L]$. The second inequality holds due to (\ref{eqn6}).

Note that $ER(OPT,\mathbf{x}')=1$ and $ER(OPT,\mathbf{x}'')=1$.
Therefore, $f$ has an approximation ratio of at least 1.0314.
\end{proof}

\subsection{Discussion}\label{discussion1}

For the deterministic case, our results are completely tight. For the randomized case, a lower bound is obtained. However, we failed in the attempt to find any randomized strategyproof mechanism with an approximation ratio less than 2.
Besides, it is not hard to verify that any mechanism which locates at $lm(\mathbf{x})$ or $rm(\mathbf{x})$ with positive probability can not have a bounded approximation ratio.

An interesting open problem is how to narrow the gap between the upper bound of 2 and the randomized lower bound of 1.0314 for the envy ratio on the fixed interval.

\section{Envy Ratio on a Relative Interval}\label{relative}

In this section, every agent's location can be any point on the real line but the facility location is restricted on an interval relative to the location profile. Formally, for a location profile $\mathbf{x}\in \mathbb{R}^n$, the facility location $y$ is restricted on $[lm(\mathbf{x})-\beta L(x),rm(\mathbf{x})+\beta L(x)]$, where $\beta>0$ is a parameter. The utility of agent $i\in N$ is defined as $u(y,x_i)=(1+\beta)L(\mathbf{x})-cost(y,x_i)$\footnote{The utility function is defined due to the consideration that the maximum cost of any agent is no more than $(1+\beta)L(\mathbf{x})$.}.

For a location profile $\mathbf{x}\in \mathbb{R}^n$, the envy ratio of the facility location $y\in[lm(\mathbf{x})-\beta L(x),rm(\mathbf{x})+\beta L(x)]$ can be written as
\begin{equation}
ER(y,\mathbf{x})=\max_{1\le i\neq j\le n}\frac{(1+\beta)L(\mathbf{x})-cost(y,x_i)}{(1+\beta)L(\mathbf{x})-cost(y,x_j)}.
\end{equation}

Now let us turn to the optimal solution for the envy ratio on the relative interval, which is similar to the case of the fixed interval.

\begin{lemma}
Given a location profile $\mathbf{x}$, the facility location $mid(\mathbf{x})$ minimizes the envy ratio.
\end{lemma}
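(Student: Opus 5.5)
The plan is to mirror the proof of Lemma \ref{mid}, replacing the fixed constant $L$ by $M:=(1+\beta)L(\mathbf{x})$ and the domain $[0,L]$ by $[lm(\mathbf{x})-\beta L(\mathbf{x}),\,rm(\mathbf{x})+\beta L(\mathbf{x})]$. I may assume $L(\mathbf{x})>0$, since otherwise every feasible $y$ gives envy ratio $1$. Write $\mathbf{x}=(x_1:N_1,\dots,x_K:N_K)$ with $x_1<\dots<x_K$. For every feasible $y$ and every $i$, $cost(y,x_i)\le \beta L(\mathbf{x})+L(\mathbf{x})=M$, so all utilities are nonnegative and the ratio is well defined, possibly $+\infty$ at the endpoints. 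Since $M$ depends only on $\mathbf{x}$ and not on $y$, for fixed $\mathbf{x}$ the envy ratio has the form
\[
ER(y,\mathbf{x})=\frac{M-\min_i|y-x_i|}{M-\max_i|y-x_i|}.
\]

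The key structural step is to identify the denominator on the left half. For $y\le mid(\mathbf{x})$, the farthest agent is always at $x_K$, so the denominator equals $M-(x_K-y)$, which increases with slope $1$ in $y$. The numerator is $M-d(y)$, where $d(y)$ is the distance from $y$ to the nearest agent location. On each of the pieces used in Lemma \ref{mid} (namely $y<x_1$, $y\in(x_i,(x_i+x_{i+1})/2)$, and $y\in((x_i+x_{i+1})/2,x_{i+1})$), $d$ is linear with slope $\pm1$.

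I then check monotonicity piece by piece, as in the three cases of Lemma \ref{mid}. When the numerator has slope $-1$ and the denominator has slope $+1$, the ratio decreases. When both have slope $+1$, the ratio $(a+y)/(b+y)$ with $a\ge b$ decreases in $y$, because $a\ge b$ is exactly the statement that the numerator is at least the denominator. Continuity at the breakpoints then gives that $ER$ is nonincreasing on $[lm(\mathbf{x})-\beta L(\mathbf{x}),mid(\mathbf{x})]$. By the symmetric argument, with $x_1$ as the farthest point, $ER$ is nondecreasing on $[mid(\mathbf{x}),rm(\mathbf{x})+\beta L(\mathbf{x})]$. Hence $mid(\mathbf{x})$, which is feasible, is a minimizer.

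The main obstacle is making sure the endpoints and signs behave: denominators must be positive except possibly where the ratio is $+\infty$, and the argument must cover the outer piece $y<x_1$, which now reaches down to $lm(\mathbf{x})-\beta L(\mathbf{x})$. At $y=lm(\mathbf{x})-\beta L(\mathbf{x})$ the denominator is $0$ and $ER=+\infty$, which is consistent with a decreasing function. For interior $y$ the denominator is positive, so the monotonicity computation is valid there.
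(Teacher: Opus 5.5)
Your proposal is correct and follows the paper's route. The paper likewise observes that $L(\mathbf{x})$, and hence $(1+\beta)L(\mathbf{x})$, is fixed for a given profile, and then reruns the piecewise monotonicity argument of Lemma~\ref{mid} on $[lm(\mathbf{x})-\beta L(\mathbf{x}),mid(\mathbf{x})]$ and its mirror image. Your write-up is more explicit than the paper's one-line proof: you justify the pieces where both numerator and denominator have slope $+1$ via $a\ge b$, and you check that the denominator vanishes only at the outer endpoint.
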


\begin{proof}
For a given location profile $\mathbf{x}$, $L(\mathbf{x})$ is fixed. Through an analysis similar to that of Lemma \ref{mid}, we can show that $ER(y,\mathbf{x})$ is monotonically decreasing on $[lm(\mathbf{x})-\beta L(\mathbf{x}),mid(\mathbf{x})]$, is monotonically increasing on $[mid(\mathbf{x}),rm(\mathbf{x})+\beta L(\mathbf{x})]$ and has the minimum at $mid(\mathbf{x})$.
\end{proof}

The facility location of $mid(\mathbf{x})$ is not strategyproof. Next, we will discuss the approximate strategyproof mechanisms.

\subsection{Deterministic Mechanisms}

In this subsection, we will give a complete characterization of the deterministic strategyproof mechanisms.

\begin{theorem}
The approximation ratio of any deterministic strategyproof mechanism is at least $1+1/\beta$.
\end{theorem}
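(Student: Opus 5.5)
The plan is to adapt the argument used for the fixed interval. We build a two-location instance, use (partial group) strategyproofness to force the facility onto one of the two locations of a modified instance, and then compute the envy ratio there directly. Since the envy ratio is scale- and translation-invariant in this model (all quantities scale with $L(\mathbf{x})$), we may normalize.

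\textbf{Base instance.} Let $f$ be any deterministic strategyproof mechanism; by the remark in Section~\ref{pre} it is also partial group strategyproof. Take $\mathbf{x}=(0:N_1,1:N_2)$ with $N_1=\{1\}$ and $N_2=\{2,\dots,n\}$, so $L(\mathbf{x})=1$, $OPT(\mathbf{x})=1/2$ and $ER(OPT,\mathbf{x})=1$. The facility $y=f(\mathbf{x})$ lies in $[-\beta,1+\beta]$. By the left--right symmetry of the instance (the argument only uses that each side is a coalition at a common location, and a single agent is such a coalition too), we may assume $y\ge 1/2$.

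\textbf{Case $y\in(1,1+\beta]$.} Compute directly on $\mathbf{x}$. The ratio of the utility of an agent at $1$ to that of the agent at $0$ is
\[
\frac{(1+\beta)-(y-1)}{(1+\beta)-y}=1+\frac{1}{1+\beta-y}.
\]
Since $0\le 1+\beta-y<\beta$, this is at least $1+1/\beta$, and it is infinite when $y=1+\beta$.

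\textbf{Case $y\in[1/2,1]$.} Consider $\mathbf{x}'=(0:N_1,y:N_2)$. If $f(\mathbf{x}')\neq y$, the coalition $N_2$ at true location $y$ could jointly report $1$ and obtain the facility exactly at $y$, i.e.\ cost $0$. This contradicts partial group strategyproofness, so $f(\mathbf{x}')=y$. Now $L(\mathbf{x}')=y$. The agents at $y$ have utility $(1+\beta)y$, and the agent at $0$ has utility $(1+\beta)y-y=\beta y$. Hence $ER(f,\mathbf{x}')=(1+\beta)/\beta=1+1/\beta$, while $ER(OPT,\mathbf{x}')=1$ by the lemma above. The degenerate subcase $y=1/2$ is included here and gives the same value.

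\textbf{Conclusion and main obstacle.} In every case some instance yields ratio at least $1+1/\beta$, which proves the theorem. The only delicate point is the case split at $y=1$: for $y>1$ the point $y$ is not a valid agent location to move $N_2$ to. This is handled by observing that such a $y$ is already bad on the original instance. One also has to check that the symmetric case $y<1/2$ works when the moving coalition is the single agent $N_1$. It does, since strategyproofness for a single agent is exactly what is needed.
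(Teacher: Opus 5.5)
Your proof is correct and is essentially the paper's argument. It uses the same instance $(0:N_1,1:N_2)$, evaluates the envy ratio directly when the facility lies outside $(0,1)$ (the paper instead uses monotonicity to compare with $ER(0,\mathbf{x})$), and otherwise moves the coalition onto the facility. One side remark: your reason for splitting at $y=1$ is off, since in this section agents may sit anywhere on $\mathbb{R}$, so $y>1$ is a valid location. The coalition-move argument therefore works for every $y\in(0,1+\beta]$, and the split (yours and the paper's) is unnecessary, though harmless.
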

\begin{proof}
Let $f$ be any deterministic strategyproof mechanism.

Consider a 2-location instance $\mathbf{x}=(0:N_1,1:N_2)$, where $N_1=\{1\},N_2=\{2,\cdots,n\}$. Note that $ER(OPT,\mathbf{x})=1$ and $f(\mathbf{x})\in[-\beta,1+\beta]$.

If $-\beta\le f(\mathbf{x})\le 0$ or $1\le f(\mathbf{x})\le 1+\beta$, then $ER(f,\mathbf{x})\ge ER(0,\mathbf{x})=\frac{\displaystyle 1+\beta-0}{\displaystyle 1+\beta-1}=1+1/\beta$. It follows that $\frac{\displaystyle ER(f,\mathbf{x})}{\displaystyle ER(OPT,\mathbf{x})}\ge 1+1/\beta$.

Otherwise, if $f(\mathbf{x})\in(0,1)$, assume without loss of generality that $f(\mathbf{x})=1/2+\epsilon$, $\epsilon\in [0,1/2)$, and consider a new profile $\mathbf{x}'=(0:N_1,1/2+\epsilon:N_2)$.
Note that $ER(OPT,\mathbf{x}')=1$. By $f$'s strategyproofness, $f(\mathbf{x}')=1/2+\epsilon$. Thus, $ER(f,\mathbf{x}')= \frac{\displaystyle (1+\beta)(1/2+\epsilon)-0}{\displaystyle (1+\beta)(1/2+\epsilon)-(1/2+\epsilon)}=1+1/\beta$. It follows that $\frac{\displaystyle ER(f,\mathbf{x}')}{\displaystyle ER(OPT,\mathbf{x}')}= 1+1/\beta$.

Therefore, $f$ has an approximation ratio of at least $1+1/\beta$.
\end{proof}

\begin{theorem}
$f(\mathbf{x})=lm(\mathbf{x})$ (or $rm(\mathbf{x})$) is a group strategyproof $(1+1/\beta)$-approximation mechanism for the envy ratio.
\end{theorem}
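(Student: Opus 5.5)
We prove two things about $f(\mathbf{x})=lm(\mathbf{x})$: that it is group strategyproof, and that its approximation ratio is at most $1+1/\beta$. The case $f(\mathbf{x})=rm(\mathbf{x})$ is symmetric. We also check that $f$ is feasible: $lm(\mathbf{x})$ always lies in $[lm(\mathbf{x})-\beta L(\mathbf{x}),rm(\mathbf{x})+\beta L(\mathbf{x})]$.

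\textbf{Group strategyproofness.} Fix $\mathbf{x}$ and a coalition $S$ that reports $\mathbf{x}'_S$, and let $\mathbf{x}'=(\mathbf{x}'_S,\mathbf{x}_{-S})$. Let $i\in N$ be an agent with $x_i=lm(\mathbf{x})$.

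If $i\in S$, agent $i$ has cost $0$ under truthful reporting. She cannot strictly benefit, so the coalition condition holds with this $i$.

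If $i\notin S$, then $x_i$ is still reported in $\mathbf{x}'$, so $lm(\mathbf{x}')\le lm(\mathbf{x})$. Every member $j\in S$ satisfies $x_j\ge lm(\mathbf{x})\ge lm(\mathbf{x}')$. Hence
$$|lm(\mathbf{x}')-x_j|=x_j-lm(\mathbf{x}')\ge x_j-lm(\mathbf{x})=|lm(\mathbf{x})-x_j|,$$
so no member of $S$ gains. This establishes group strategyproofness.

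\textbf{Approximation ratio.} We use $ER(OPT,\mathbf{x})\ge 1$ and bound $ER(f,\mathbf{x})$ directly. We may assume $L(\mathbf{x})>0$; if $L(\mathbf{x})=0$ all agents coincide and the ratio is $1$ under the natural convention. With $y=lm(\mathbf{x})$, every cost $|y-x_i|$ lies in $[0,L(\mathbf{x})]$. Each utility therefore lies in $[\beta L(\mathbf{x}),(1+\beta)L(\mathbf{x})]$, and every ratio of two utilities is at most
$$\frac{(1+\beta)L(\mathbf{x})}{\beta L(\mathbf{x})}=1+\frac1\beta.$$
So the approximation ratio is at most $1+1/\beta$.

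\textbf{Tightness.} Take the 2-location instance $\mathbf{x}=(0:N_1,1:N_2)$, for which $ER(OPT,\mathbf{x})=1$, as in the preceding lower-bound theorem. Here $f(\mathbf{x})=0$ and $ER(f,\mathbf{x})=(1+\beta)/\beta=1+1/\beta$, so the ratio is exactly $1+1/\beta$. The same instance reflected handles $rm(\mathbf{x})$.

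There is no real obstacle in this argument. The only point needing care is the case analysis in the group strategyproofness step: if the leftmost agent belongs to the coalition she cannot gain, and otherwise the leftmost point can only move further left, which hurts every member of the coalition.
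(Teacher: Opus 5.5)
Your proof is correct and matches the paper's route for the approximation ratio: you bound $ER(f,\mathbf{x})$ by $1+1/\beta$ using $ER(OPT,\mathbf{x})\ge 1$, then attain the bound on the instance $(0:N_1,1:N_2)$. The only difference is that you verify group strategyproofness explicitly, with the two-case argument on whether the leftmost agent is in the coalition, whereas the paper asserts it without proof.
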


\begin{proof}
We only need to prove the approximation ratio of the mechanism $f(\mathbf{x})=lm(\mathbf{x})$.

For any location profile $\mathbf{x}\in \mathbb{R}^n$ ($L(\mathbf{x})> 0$), $ER(f,\mathbf{x})= \frac{\displaystyle (1+\beta)L(\mathbf{x})-0}{\displaystyle (1+\beta)L(\mathbf{x})-L(\mathbf{x})}=1+1/\beta$, $ER(OPT,\mathbf{x})\ge 1$. Thus, $\sup_{\mathbf{x}}\frac{\displaystyle ER(f,\mathbf{x})}{\displaystyle ER(OPT,\mathbf{x})}\le 1+1/\beta$.

Consider a location profile $\mathbf{x}'=(0:N_1,1:N_2)$, where $N_1=\{1\},N_2=\{2,\cdots,n\}$.
Note that $ER(OPT,\mathbf{x}')=1$ and $ER(f,\mathbf{x}')=\frac{\displaystyle(1+\beta)-0}{\displaystyle(1+\beta)-1}=1+1/\beta$, which implies that $\frac{\displaystyle ER(f,\mathbf{x}')}{\displaystyle ER(OPT,\mathbf{x}')}=1+1/\beta$.

Thus, $f(\mathbf{x})=lm(\mathbf{x})$ has an approximation ratio of $1+1/\beta$.
\end{proof}

Observe that the best deterministic strategyproof mechanism has been obtained and we will turn our attention to randomized strategyproof mechanisms.

\subsection{Randomized Mechanisms}

In this subsection, we will give a lower bound and two upper bounds for randomized strategyproof mechanisms.

\begin{theorem}
If $\beta>1/2$, any randomized strategyproof mechanism has an approximation ratio of at least $1+\frac{\displaystyle 2\beta-1}{\displaystyle 8\beta^2(1+\beta)}$.
\end{theorem}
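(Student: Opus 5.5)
The plan follows the proof of the randomized lower bound on the fixed interval: use strategyproofness to force a lower bound on an agent's expected distance from the facility, move to a profile where that agent sits at the midpoint (so $ER(OPT,\cdot)=1$), and turn the distance bound into an envy-ratio bound using the monotonicity of $ER(\cdot,\mathbf{x})$ from the lemma of this section.

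\textbf{Step 1 (a distance bound from the triangle inequality).} Let $f$ be any randomized strategyproof mechanism and take $\mathbf{x}=(0:N_1,1:N_2)$ with $N_1=\{1\}$, $N_2=\{2,\dots,n\}$. Since $\mathbb{E}|Y|+\mathbb{E}|Y-1|\ge 1$, either $cost(f(\mathbf{x}),0)\ge 1/2$ or $cost(f(\mathbf{x}),1)\ge 1/2$. The two cases are symmetric under $y\mapsto 1-y$. By the Remark, $f$ is partial group strategyproof, so in the second case the coalition $N_2$ can play the role that agent $1$ plays in the first case. I therefore treat only the first case.

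\textbf{Step 2 (moving to a symmetric profile).} Let agent $1$ report $-1$, giving $\mathbf{x}'=(-1:N_1,1:N_2)$. Then $L(\mathbf{x}')=2$, $mid(\mathbf{x}')=0$ and $ER(OPT,\mathbf{x}')=1$. If $cost(f(\mathbf{x}'),0)<cost(f(\mathbf{x}),0)$, agent $1$ at true location $0$ would gain by reporting $-1$. Strategyproofness therefore gives $\mathbb{E}|Y'|\ge 1/2$ for $Y'\sim f(\mathbf{x}')$, with $Y'$ supported on $[-1-2\beta,1+2\beta]$.

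\textbf{Step 3 (envy ratio as a function of the distance).} Write $d=|y|$. A direct computation gives $ER(y,\mathbf{x}')=g(d)$, where
\[
g(d)=\frac{1+2\beta+d}{1+2\beta-d}\ \text{ for } d\in[0,1],\qquad g(d)=1+\frac{2}{1+2\beta-d}\ \text{ for } d\in[1,1+2\beta].
\]
In particular $g$ is increasing in $d$, as the lemma predicts.

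\textbf{Step 4 (main obstacle: minimizing $\mathbb{E}\,g(|Y'|)$ subject to $\mathbb{E}|Y'|\ge 1/2$).} The function $g$ is convex on each piece but has a concave kink at $d=1$: its right slope $2/(4\beta^2)$ is smaller than its left slope $2(1+2\beta)/(4\beta^2)$. So Jensen's inequality does not apply, and the adversarial distribution will mix mass near $0$ with mass far out. I would handle this as in the fixed-interval proof, splitting $[-1-2\beta,1+2\beta]$ into three regions: a central window $[-\delta,\delta]$ with probability $p_1$, the rest of $[-1,1]$ with probability $p_2$, and the outside part with probability $p_3$.

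\begin{itemize}
\item Bound $\mathbb{E}|Y'|\le \delta p_1+p_2+(1+2\beta)p_3$. Combined with $\mathbb{E}|Y'|\ge 1/2$, this gives a lower bound on $p_3$ in terms of $p_2$.
\item Bound $ER(f,\mathbf{x}')\ge p_1+g(\delta)p_2+g(1)p_3$.
\item Substitute the bound on $p_3$ and choose $\delta$ so that the coefficient of $p_2$ vanishes, exactly as $\delta=(4-\sqrt7)L/12$ was chosen on the fixed interval.
\end{itemize}

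An equivalent and cleaner route is to find the largest $c$ with $g(d)\ge 1+cd$ on the whole support, which yields $ER(f,\mathbf{x}')\ge 1+c/2$.

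On $[1,1+2\beta]$ the ratio $(g(d)-1)/d=2/(d(1+2\beta-d))$ is minimized at $d=(1+2\beta)/2$. This minimizer lies in that range precisely when $\beta\ge 1/2$, which explains the hypothesis $\beta>1/2$.

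I expect the real difficulty to be tuning the thresholds and the reporting profile (for example, how far agent $1$ moves) so that the bound comes out exactly as $1+\frac{2\beta-1}{8\beta^2(1+\beta)}$. The factor $1+\beta$ in the denominator suggests the final profile should be normalized so that the utility scale is $(1+\beta)L$. I would first compute the bound in the scaled form above, then adjust the deviating report and the window $\delta$ until the coefficient of $p_2$ vanishes and the remaining constant matches the claimed expression. I would then check that the symmetric case gives a bound at least as large, so the minimum over the two cases is the stated one.
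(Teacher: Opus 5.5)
Your Steps 1--3 and the three-region scheme of Step 4 are exactly the paper's argument. The profiles are the same, $\mathbf{x}=(0:N_1,1:N_2)$ and $\mathbf{x}'=(-1:N_1,1:N_2)$, and so are the bounds $\mathbb{E}|Y'|\le\delta p_1+p_2+(1+2\beta)p_3$ and $ER(f,\mathbf{x}')\ge p_1+g(\delta)p_2+g(1)p_3$. Your treatment of the second case, via partial group strategyproofness of $N_2$, is more careful than the paper's ``without loss of generality''. The gap is that you stop before the one computation that produces the constant. You also need no adjustment of the profile: it works exactly as set up. With $p_1=1-p_2-p_3$ and $g(1)=1+1/\beta$ you get $ER(f,\mathbf{x}')\ge 1+\frac{2\delta}{1+2\beta-\delta}p_2+\frac{1}{\beta}p_3$. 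Now substitute $p_3\ge\frac{1/2-\delta-(1-\delta)p_2}{1+2\beta-\delta}$; this is valid even if the right side is negative, since the coefficient $1/\beta$ is positive. The coefficient of $p_2$ becomes $\frac{(2\beta+1)\delta-1}{\beta(1+2\beta-\delta)}$, so $\delta=1/(2\beta+1)$ kills it. The remaining constant is $\frac{1/2-\delta}{\beta(1+2\beta-\delta)}$. Since $1/2-\delta=\frac{2\beta-1}{2(1+2\beta)}$ and $1+2\beta-\delta=\frac{4\beta(1+\beta)}{1+2\beta}$, this equals $\frac{2\beta-1}{8\beta^2(1+\beta)}$. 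That is where the factor $1+\beta$ comes from, not from any renormalization; the utility scale is $(1+\beta)L(\mathbf{x})$ by definition. Likewise, the hypothesis $\beta>1/2$ is just $\delta<1/2$, i.e.\ positivity of this constant. It has nothing to do with where $2/(d(1+2\beta-d))$ is minimized.

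Your ``equivalent and cleaner route'' is not equivalent to the three-region bound. It is strictly stronger, and finishing it would have given a complete two-line proof. You only examined the outer piece. On $(0,1]$, $(g(d)-1)/d=2/(1+2\beta-d)\ge 2/(1+2\beta)$, with the infimum approached as $d\to 0$. On $[1,1+2\beta)$, $(g(d)-1)/d=2/(d(1+2\beta-d))\ge 8/(1+2\beta)^2$ for every $\beta>0$. Hence $g(d)\ge 1+cd$ with $c=\min\{2/(1+2\beta),\,8/(1+2\beta)^2\}$. The inner piece binds when $\beta<3/2$, so taking $c=8/(1+2\beta)^2$ alone would be wrong there. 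This gives $ER(f,\mathbf{x}')\ge 1+c\,\mathbb{E}|Y'|\ge 1+\min\{\frac{1}{1+2\beta},\frac{4}{(1+2\beta)^2}\}$. That exceeds $1+\frac{2\beta-1}{8\beta^2(1+\beta)}$ for every $\beta>0$, for instance $4/3$ versus $17/16$ at $\beta=1$, and needs no hypothesis on $\beta$. The three-region argument loses because region 3 is charged a distance of up to $1+2\beta$ in the cost constraint but only the envy $g(1)=1+1/\beta$, although $g(d)\to\infty$ as $d\to 1+2\beta$. This does not conflict with the paper's upper bounds: at $\beta=1$ both $1+1/(2\beta)$ and $1+2/(1+\beta)^2$ equal $3/2$, so the $5/4$ in the paper's discussion is a slip.
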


\begin{proof}
Let $f$ be any randomized strategyproof mechanism.

Consider a location profile $\mathbf{x}=(0:N_1,1:N_2)$, where $N_1=\{1\},N_2=\{2,\cdots,n\}$.
Due to the triangle inequality, $cost(f(\mathbf{x}),0)+cost(f(\mathbf{x}),1)=\mathbb{E}_{Y\sim f(\mathbf{x})}[|Y-0|+|Y-1|]\ge 1$. Without loss of generality, we assume that $cost(f(\mathbf{x}),0)\ge 1/2$ and consider another location profile $\mathbf{x}'=(-1:N_1,1:N_2)$.

We claim that
\begin{equation}\label{eqn1}
cost(f(\mathbf{x}'),0)\ge cost(f(\mathbf{x}),0)\ge 1/2.
\end{equation}

Otherwise, agent 1 at location 0 can benefit by reporting $-1$ instead of 0, which contradicts $f$'s strategyproofness. Let $Y'$ be a random variable according to the probability distribution $f(\mathbf{x}')$. Note that $L(\mathbf{x}')=2$ and $Y'\in [-1-2\beta, 1+2\beta]$.

Denote
$\delta=1/(2\beta+1)$ and let $p_1=Pr\{Y'\in[-\delta,\delta]\},p_2=Pr\{Y'\in[-1,-\delta)\cup(\delta,1]\},p_3=Pr\{Y'\in[-1-2\beta)\cup(1,1+2\beta]\}$.
Then, we have
\begin{eqnarray}\label{eqn2}
cost(f(\mathbf{x}'),0)&=&\mathbb{E}[|Y'-0|]\nonumber\\
&=& p_1\mathbb{E}\left(|Y'|~|Y'\in[\delta,\delta]\right)+p_2\mathbb{E}\left(|Y'|~|Y'\in[-1,-\delta)\cup(\delta,1]\right)\nonumber\\
&&{}+p_3\mathbb{E}\left(|Y'|~|Y'\in[-1-2\beta)\cup(1,1+2\beta]\right)\nonumber\\
&\le&\delta p_1+p_2+(1+2\beta)p_3\nonumber\\
&=&\delta+(1-\delta)p_2+(1-\delta+2\beta)p_3
\end{eqnarray}

Combining (\ref{eqn1}) with (\ref{eqn2}), we have
\begin{equation}\label{eqn3}
p_3\ge\frac{1/2-\delta-(1-\delta)p_2}{1-\delta+2\beta}
\end{equation}

Next, we consider the envy ratio of $f$.
\begin{eqnarray*}
ER(f,\mathbf{x}')&=&p_1\mathbb{E}\left(ER(Y',\mathbf{x}')|Y'\in[-\delta,\delta]\right)+p_2\mathbb{E}(ER(Y',\mathbf{x}')|Y'\in[-1,-\delta)\\
&&{}\cup(\delta,1])+p_3\mathbb{E}\left(ER(Y',\mathbf{x}')|Y'\in[-1-2\beta)\cup(1,1+2\beta]\right)\\
&\ge&p_1+\frac{1+2\beta+\delta}{1+2\beta-\delta}p_2+(1+\frac{1}{\beta})p_3\\
&=&1+\frac{2\delta}{1+2\beta-\delta}p_2+\frac{1}{\beta}p_3\\
&\ge&1+\frac{2\delta}{1+2\beta-\delta}p_2+\frac{1}{\beta}\cdot\frac{1/2-\delta-(1-\delta)p_2}{1-\delta+2\beta}\\
&=&1+\frac{2\beta-1}{8\beta^2(1+\beta)}
\end{eqnarray*}
Here, the first inequality holds because $ER(\cdot,\mathbf{x}')$ monotonically decreases on $[-1-2\beta,0]$ and monotonically increases on $[0,1+2\beta]$. The second inequality holds due to (\ref{eqn3}).

Note that $ER(OPT,\mathbf{x}')=1$, which implies that $f$ has an approximation ratio of at least $1+\frac{\displaystyle 2\beta-1}{\displaystyle 8\beta^2(1+\beta)}$.
\end{proof}

We now turn to seeking randomized strategyproof mechanisms. For this purpose, consider two classes of randomized mechanisms, both of which are very meaningful to the mechanism design for the envy ratio. The first one is a direct generalization of the well-known LRM mechanism \citep{Alon2010,Procaccia2009} and is parameterized by a constant $\gamma\in[0,1/2]$. For convenience, we denote this \textit{generalized LRM mechanism with parameter $\gamma$} as \textit{Mechanism $\gamma$-GLRM} and give the formal definition later. The second one is \textit{Mechanism $\alpha$-LRM}, which is parameterized by a constant $\alpha\in(0,1/4]$, was introduced by \citet{Cai2016}.

Next, we will analyze the approximation ratio and the strategyproofness of these two classes of mechanisms.

\begin{definition}
\textit{Mechanism $\gamma$-GLRM} is parameterized by $\gamma\in[0,1/2]$; for any location profile $\mathbf{x}$, \textit{Mechanism $\gamma$-GLRM} places the facility at $mid(\mathbf{x})$ with probability $1-2\gamma$, at $lm(\mathbf{x})$ with probability $\gamma$ and at $rm(\mathbf{x})$ with probability $\gamma$.
\end{definition}

\begin{lemma}\label{frratio}
\textit{Mechanism $\gamma$-GLRM} has an approximation ratio of $1+2\gamma/\beta$ for the envy ratio.
\end{lemma}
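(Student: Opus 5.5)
Write $\mathbf{x}$ for a profile with $L(\mathbf{x})>0$. The plan is to compute $ER$ at each of the three support points of Mechanism $\gamma$-GLRM exactly, bound $ER(OPT,\mathbf{x})\ge 1$, and then exhibit a tight instance. (If $L(\mathbf{x})=0$, every utility is $0$; we adopt the convention that such instances are trivial, with all ratios equal to $1$, as in the earlier sections.)

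\textbf{Step 1: the endpoints.} At $y=lm(\mathbf{x})$ the utilities range from $(1+\beta)L(\mathbf{x})$, for agents at $lm(\mathbf{x})$, down to $\beta L(\mathbf{x})$, for agents at $rm(\mathbf{x})$. Hence
\[
ER(lm(\mathbf{x}),\mathbf{x})=\frac{(1+\beta)L(\mathbf{x})}{\beta L(\mathbf{x})}=1+\frac{1}{\beta},
\]
and symmetrically $ER(rm(\mathbf{x}),\mathbf{x})=1+1/\beta$.

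\textbf{Step 2: the midpoint.} At $y=mid(\mathbf{x})$ every cost lies in $[0,L(\mathbf{x})/2]$, and the largest cost $L(\mathbf{x})/2$ is attained by the extreme agents. Therefore
\[
ER(mid(\mathbf{x}),\mathbf{x})\le \frac{(1+\beta)L(\mathbf{x})}{(1+\beta)L(\mathbf{x})-L(\mathbf{x})/2}\le \frac{1+\beta}{1/2+\beta}.
\]
This is the step that needs care. Comparing $ER(mid(\mathbf{x}),\mathbf{x})$ against $ER(OPT,\mathbf{x})$ is trivial, because the previous Lemma shows $mid(\mathbf{x})$ is optimal. The difficulty is that the endpoint terms are measured against $ER(OPT,\mathbf{x})$ rather than against $1$.

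\textbf{Step 3: the upper bound.} First I would try the direct inequality
\[
\frac{ER(f,\mathbf{x})}{ER(OPT,\mathbf{x})}=(1-2\gamma)+\frac{2\gamma(1+1/\beta)}{ER(mid(\mathbf{x}),\mathbf{x})}\le (1-2\gamma)+2\gamma\left(1+\frac{1}{\beta}\right)=1+\frac{2\gamma}{\beta},
\]
which uses $ER(OPT,\mathbf{x})=ER(mid(\mathbf{x}),\mathbf{x})\ge 1$ together with Step 1. The bound is clean, and Step 2 is only needed if a sharper analysis is wanted.

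\textbf{Step 4: tightness.} Take the 2-location instance $\mathbf{x}'=(0:N_1,1:N_2)$. At the midpoint both agents have cost $1/2$, so $ER(OPT,\mathbf{x}')=1$. The mechanism then gives $(1-2\gamma)\cdot 1+2\gamma(1+1/\beta)=1+2\gamma/\beta$, so the supremum equals $1+2\gamma/\beta$.

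The main obstacle is therefore only conceptual: recognising that $ER(OPT,\mathbf{x})$ should be lower bounded by $1$ in the denominator, while the midpoint term cancels exactly.
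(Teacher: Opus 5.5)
Your proof is correct and follows the paper's argument. You compute $ER(lm(\mathbf{x}),\mathbf{x})=ER(rm(\mathbf{x}),\mathbf{x})=1+1/\beta$, use the optimality of $mid(\mathbf{x})$ with $ER(mid(\mathbf{x}),\mathbf{x})\ge 1$ to obtain the bound $1+2\gamma/\beta$, and attain it on a 2-location instance; your Step 2 bound on $ER(mid(\mathbf{x}),\mathbf{x})$ is never used and can be dropped.
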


\begin{proof}
Let $f_{\gamma}$ be a mechanism in \textit{Mechanism $\gamma$-GLRM}.
For any location profile $\mathbf{x}\in \mathbb{R}^n$ ($L(\mathbf{x})>0$), $ER(OPT,\mathbf{x})=ER(mid(\mathbf{x}),\mathbf{x})$, and
$ER(f_{\gamma},\mathbf{x})=(1-2\gamma)ER(mid(\mathbf{x}),\mathbf{x})+2\gamma(1+1/\beta)$. It follows that
\begin{equation}
\frac{ER(f_{\gamma},\mathbf{x})}{ER(OPT,\mathbf{x})}=1-2\gamma+\frac{2\gamma(1+1/\beta)}
{ER(mid(\mathbf{x}),\mathbf{x})}\le 1+\frac{2\gamma}{\beta}.
\end{equation}

"=" in the above inequality holds for any 2-location instance. Thus, the approximation ratio of $f_{\gamma}$ is $1+2\gamma/{\beta}$. \qed
\end{proof}

\begin{lemma}\label{frgsp}
\textit{Mechanism $\gamma$-GLRM} is group strategyproof if and only if $\gamma\in[1/4,1/2]$.
\end{lemma}

\begin{proof}
\textbf{If part.} Let $S\subseteq N$ be a coalition. We need to show that the agents in S cannot all gain by deviating. Note that for a given location profile $\mathbf{x}\in \mathbb{R}^n$, $f_{\gamma}(\mathbf{x})$ only depends on the location $lm(\mathbf{x})$ and $rm(\mathbf{x})$. Let $\mathbf{x}'=(\mathbf{x}_S',\mathbf{x}_{-S})$, $\Delta_1=lm(\mathbf{x})-lm(\mathbf{x}')$ and $\Delta_2=rm(\mathbf{x}')-lm(\mathbf{x})$. Now we consider the following cases.

\textbf{Case 1.} $\Delta_1\ge 0$, $\Delta_2\ge 0$. For any $i\in S$,
\begin{eqnarray*}
cost(f_{\gamma}(\mathbf{x}'),x_i&=&\gamma(x_i-lm(\mathbf{x}'))+\gamma(rm(\mathbf{x}')-x_i)\\
&&{}+(1-2\gamma)\left|\frac{lm(\mathbf{x}')+rm(\mathbf{x}')}{2}-x_i\right|\\
&\ge&cost(f(\mathbf{x}),x_i)+\gamma(\Delta_1+\Delta_2)-\frac{1-2\gamma}{2}|\Delta_1-\Delta_2|\\
&\ge&cost(f(\mathbf{x}),x_i)+\gamma(\Delta_1+\Delta_2)-\frac{1-2\gamma}{2}(\Delta_1+\Delta_2)\\
&=&cost(f(\mathbf{x}),x_i)+(2\gamma-1/2)(\Delta_1+\Delta_2)\\
&\ge&cost(f(\mathbf{x}),x_i)
\end{eqnarray*}

\textbf{Case 2.} $\Delta_1< 0$, $\Delta_2\ge 0$. In this case, the leftmost agent must be a member of $S$. It is obvious that this agent cannot benefit from deviating, since the leftmost location, possibly the rightmost location and the center are all moving far away from her.

\textbf{Case 3.} $\Delta_1\ge 0$, $\Delta_2< 0$. In this case, the rightmost agent must be in $S$ cannot benefit from the deviation, which is symmetric to Case 2.

\textbf{Case 4.} $\Delta_1< 0$, $\Delta_2< 0$. In this case, both the leftmost agent and the rightmost agent must be members of $S$.
\begin{eqnarray*}
&&cost(f(\mathbf{x}'),lm(\mathbf{x}))+cost(f(\mathbf{x}'),rm(\mathbf{x}))\\
&=&rm(\mathbf{x})-lm(\mathbf{x})\\
&=&cost(f(\mathbf{x}),lm(\mathbf{x}))+cost(f(\mathbf{x}),rm(\mathbf{x}))
\end{eqnarray*}
Thus, either $cost(f(\mathbf{x}'),lm(\mathbf{x}))\ge cost(f(\mathbf{x}),lm(\mathbf{x}))$, or $cost(f(\mathbf{x}'),rm(\mathbf{x}))\ge cost(f(\mathbf{x}),rm(\mathbf{x}))$. This implies that either the leftmost agent or the rightmost agent cannot benefit from the deviation.

\textbf{Only if part.} We show this part by contradiction. Assume $\gamma< 1/4$ and we will show that there exists an instance such that some agent can benefit from misreporting.

Consider $\mathbf{x}=(0:N_1,1:N_2)$, where $N_1=\{1\}, N_2=\{2,\cdots,n\}$. The cost of agent 1 is $cost(f(\mathbf{x}),0)=1/2$. Let agent 1 misreports her location to $-1$, and denote $\mathbf{x}'=(-1:N_1,1,N_2)$. Then the cost of agent 1 becomes $cost(f(\mathbf{x}'),0)=2\gamma<1/2$, which implies that she can benefit from misreporting. This contradicts $f_{\gamma}$'s strategyproofness.
\end{proof}

Combining Lemma \ref{frratio} with Lemma \ref{frgsp}, we can immediately obtain the following theorem.

\begin{theorem}
\textit{Mechanism 1/4-GLRM} is a group strategyproof mechanism with approximation ratio of $\displaystyle 1+1/(2\beta)$ for the envy ratio.
\end{theorem}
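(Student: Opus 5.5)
This theorem follows by putting $\gamma=1/4$ into the two lemmas just proved, so the plan is mainly to check that this value satisfies both hypotheses. The value $\gamma=1/4$ is the smallest one allowed by Lemma \ref{frgsp}, which is why it gives the best ratio within this family.

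\textbf{Strategyproofness.} Since $1/4\in[1/4,1/2]$, Lemma \ref{frgsp} gives that \textit{Mechanism 1/4-GLRM} is group strategyproof. The only part of that proof that depends on $\gamma$ is Case~1, where both extremes move outward. There the cost of every coalition member changes by at least $(2\gamma-1/2)(\Delta_1+\Delta_2)$, and this quantity is nonnegative exactly when $\gamma\ge 1/4$. At $\gamma=1/4$ it equals $0$, so no member of the coalition gains, which is all group strategyproofness requires. Cases~2--4 do not depend on $\gamma$ and go through unchanged.

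\textbf{Approximation ratio.} Lemma \ref{frratio} with $\gamma=1/4$ gives approximation ratio $1+2\cdot\frac14/\beta=1+1/(2\beta)$. The upper bound comes from
\[
\frac{ER(f_{1/4},\mathbf{x})}{ER(OPT,\mathbf{x})}=\frac12+\frac{\frac12(1+1/\beta)}{ER(mid(\mathbf{x}),\mathbf{x})},
\]
together with $ER(mid(\mathbf{x}),\mathbf{x})\ge 1$. Tightness comes from any 2-location instance, where $ER(mid(\mathbf{x}),\mathbf{x})=1$.

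There is no real obstacle here. The one point to check is that the boundary value $\gamma=1/4$ really belongs to the "if" part of Lemma \ref{frgsp}, and the nonnegativity of $(2\gamma-1/2)(\Delta_1+\Delta_2)$ at $\gamma=1/4$ settles that.
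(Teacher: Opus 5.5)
Your proposal is correct and is exactly the paper's argument: the theorem follows by setting $\gamma=1/4$ in Lemma~\ref{frratio}, which gives ratio $1+2\gamma/\beta=1+1/(2\beta)$ with tightness on 2-location instances, and in Lemma~\ref{frgsp}, which gives group strategyproofness for $\gamma\in[1/4,1/2]$. Your additional observation is also accurate: only Case~1 of the group-strategyproofness proof depends on $\gamma$, through the bound $(2\gamma-1/2)(\Delta_1+\Delta_2)\ge 0$, and it still holds at the boundary value $\gamma=1/4$.
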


It is obvious that \textit{Mechanism 1/4-GLRM} is the best strategyproof mechanism in the class of \textit{Mechanism $\gamma$-GLRM}. A more intuitive interpretation is provided as follows. Given a location profile $\mathbf{x}$, the outcome of \textit{Mechanism $\gamma$-GLRM} is a randomization between the optimal solution (i.e., $mid(\mathbf{x})$) and the best deterministic strategyproof mechanisms (i.e., $lm(\mathbf{x})$ or $rm(\mathbf{x})$). $mid(\mathbf{x})$ is optimal (i.e., has an approximation ratio of 1) but is not strategyproof, while $lm(\mathbf{x})$ or $rm(\mathbf{x})$ has an approximation ratio of $1+1/\beta$ and is group strategyproof. While increasing the probability $\gamma$ of locating at $lm(\mathbf{x})$ or $rm(\mathbf{x})$, the approximation ratio is gradually sacrificed to achieve strategyproofness. When $\gamma$ increases to 1/4, the optimal tradeoff between approximation ratio and strategyproofness in the class of \textit{Mechanism $\gamma$-GLRM} is obtained.

\begin{definition}\cite{Cai2016}
\textit{Mechanism $\alpha$-LRM} is parameterized by $\alpha\in(0,1/4]$; for any location profile $\mathbf{x}$, denote $L^{\alpha}(\mathbf{x})=\frac{\displaystyle 1-4\alpha}{\displaystyle 4\alpha}L(\mathbf{x})$, \textit{Mechanism $\alpha$-LRM} places the facility at $mid(\mathbf{x})$ with probability $1-2\alpha$, at $lm(\mathbf{x})-L^{\alpha}(\mathbf{x})$ with probability $\alpha$ and at $rm(\mathbf{x})+L^{\alpha}(\mathbf{x})$ with probability $\alpha$.
\end{definition}

\begin{remark}
Considering that for a location profile $\mathbf{x}$, the facility location is restricted on $[lm(\mathbf{x})-\beta L(x),rm(\mathbf{x})+\beta L(x)]$, we restrict the parameter $\alpha$ in $\left(\frac{\displaystyle 1}{\displaystyle 4(1+\beta)},1/4\right]$.
\end{remark}

\begin{lemma}
\textit{Mechanism $\alpha$-LRM} has an approximation ratio of $1+\frac{\displaystyle 2\alpha}{\displaystyle 1+\beta-1/(4\alpha)}$ for the envy ratio.
\end{lemma}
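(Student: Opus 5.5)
The plan is to follow the proof of Lemma \ref{frratio} exactly: write the expected envy ratio of \textit{Mechanism $\alpha$-LRM} as a convex combination of the envy ratios at its three support points, divide by $ER(OPT,\mathbf{x})=ER(mid(\mathbf{x}),\mathbf{x})$, and use $ER(mid(\mathbf{x}),\mathbf{x})\ge 1$. The only new work is computing the envy ratio at the two outer points $lm(\mathbf{x})-L^{\alpha}(\mathbf{x})$ and $rm(\mathbf{x})+L^{\alpha}(\mathbf{x})$. Throughout, I assume $L(\mathbf{x})>0$, since $L(\mathbf{x})=0$ is degenerate.

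\textbf{Step 1 (feasibility).} Set $c=L^{\alpha}(\mathbf{x})/L(\mathbf{x})=\frac{1}{4\alpha}-1$. The outer points lie in $[lm(\mathbf{x})-\beta L(\mathbf{x}),rm(\mathbf{x})+\beta L(\mathbf{x})]$ iff $c\le\beta$, i.e.\ iff $\alpha\ge\frac{1}{4(1+\beta)}$. This is guaranteed by the Remark, so the mechanism is well defined.

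\textbf{Step 2 (outer points).} Take $y=lm(\mathbf{x})-L^{\alpha}(\mathbf{x})$. Since $y$ lies left of every agent, the largest utility belongs to the agent at $lm(\mathbf{x})$, namely $(1+\beta)L-cL$. The smallest belongs to the agent at $rm(\mathbf{x})$, namely $(1+\beta)L-(1+c)L$. Hence
\[
ER(y,\mathbf{x})=\frac{1+\beta-c}{\beta-c}=1+\frac{1}{1+\beta-1/(4\alpha)}.
\]
Set $D=1+\beta-1/(4\alpha)$. The strict inequality $\alpha>\frac{1}{4(1+\beta)}$ in the Remark gives $D>0$, so this ratio is finite. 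By symmetry the same value holds at $rm(\mathbf{x})+L^{\alpha}(\mathbf{x})$. The value depends on $\mathbf{x}$ only through scale-free quantities.

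\textbf{Step 3 (combine).} Let $R=ER(mid(\mathbf{x}),\mathbf{x})$. Then
\[
\frac{ER(f,\mathbf{x})}{ER(OPT,\mathbf{x})}=1-2\alpha+\frac{2\alpha(1+1/D)}{R}\le 1+\frac{2\alpha}{D},
\]
using $R\ge1$. For tightness, take any 2-location instance such as $(0:N_1,1:N_2)$. There $mid(\mathbf{x})$ gives both groups equal utility, so $R=1$ and equality holds.

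The main obstacle is Step 2, and even that is routine. One must correctly identify which agents attain the maximum and minimum utility at an outer point: it is the nearest and farthest agents, because every agent lies on one side of $y$. One must also check that the Remark's restriction on $\alpha$ makes the denominator $D$ positive. Everything else is identical to Lemma \ref{frratio}.
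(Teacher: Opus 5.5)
Your proposal is correct and matches the paper's proof. The paper likewise writes $ER(f,\mathbf{x})=(1-2\alpha)ER(mid(\mathbf{x}),\mathbf{x})+2\alpha\cdot\frac{2+\beta-1/(4\alpha)}{1+\beta-1/(4\alpha)}$, where the second factor equals your $1+1/D$, bounds the quotient using $ER(mid(\mathbf{x}),\mathbf{x})\ge 1$, and gets tightness from 2-location instances; your explicit outer-point computation and the check that the Remark's range for $\alpha$ gives feasibility and $D>0$ supply details the paper leaves implicit.
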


\begin{proof}
Let $f$ be a mechanism in the class of \textbf{$\alpha$-LRM}. For any location profile $\mathbf{x}\in \mathbb{R}^n$ ($L(\mathbf{x})> 0$), $ER(OPT,\mathbf{x})=ER(mid(\mathbf{x}),\mathbf{x})$, $ER(f,\mathbf{x})=(1-2\alpha)ER(mid(\mathbf{x}),
\mathbf{x})+2\alpha\displaystyle\cdot\frac{2+\beta-1/(4\alpha)}{1+\beta-1/(4\alpha)}$. Then
\begin{eqnarray*}
\frac{ER(f,\mathbf{x})}{ER(OPT,\mathbf{x})}&=&1-2\alpha+\frac{\displaystyle 2\alpha\cdot\frac{2+\beta-1/(4\alpha)}{1+\beta-1/(4\alpha)}}{ER(mid(\mathbf{x}),\mathbf{x})}\\
&&\le 1+\frac{2\alpha}{1+\beta-1/(4\alpha)}.
\end{eqnarray*}
"=" in the above inequality holds for any 2-location instance. Thus, the approximation ratio is $\displaystyle 1+\frac{2\alpha}{1+\beta-1/(4\alpha)}$.
\end{proof}

\begin{lemma}{\cite{Cai2016}}
\textit{Mechanism $\alpha$-LRM} is strategyproof.
\end{lemma}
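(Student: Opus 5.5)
The plan is to reduce strategyproofness to a one-dimensional computation. The outcome of \textit{Mechanism $\alpha$-LRM} depends on $\mathbf{x}$ only through $lm(\mathbf{x})$ and $rm(\mathbf{x})$. So an agent who is strictly inside $(lm(\mathbf{x}),rm(\mathbf{x}))$ can change the outcome only by reporting a point outside $[lm(\mathbf{x}),rm(\mathbf{x})]$. By symmetry it suffices to analyse deviations that move one endpoint outward, plus the case where the deviator is herself an extreme agent.

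Throughout, fix the truthful profile with $l=lm(\mathbf{x})$, $r=rm(\mathbf{x})$, $L=r-l$, $c=\frac{1-4\alpha}{4\alpha}$ (so $L^{\alpha}=cL$), and an agent at $x_i\in[l,r]$. Her expected cost is
\begin{equation*}
C(l,r)=(1-2\alpha)\left|\tfrac{l+r}{2}-x_i\right|+\alpha(x_i-l+cL)+\alpha(r+cL-x_i).
\end{equation*}
This uses the fact that $lm-L^{\alpha}\le x_i\le rm+L^{\alpha}$. The last two terms sum to $\alpha(1+2c)L=\frac{1-2\alpha}{2}L\cdot\frac{\alpha}{\alpha}\cdot\frac{2\alpha(1+2c)}{1-2\alpha}$; concretely $\alpha(1+2c)=\frac12-\alpha$. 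Hence
\begin{equation*}
C=(\tfrac12-\alpha)\bigl(2|m-x_i|+L\bigr),
\end{equation*}
where $m=(l+r)/2$. This closed form is the key identity: the cost equals $(\frac12-\alpha)$ times twice the distance to the midpoint plus the length.

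\textbf{Case A: agent strictly interior (or not extreme), deviating to the right to $r'>r$.} Here $L$ increases by $\Delta=r'-r$ and $m$ moves right by $\Delta/2$. Thus $2|m-x_i|$ changes by at least $-\Delta$, while $L$ increases by $\Delta$. So the new cost is $\ge$ the old cost. The formula stays valid because $x_i$ still lies in $[l,r']$. Leftward deviations are symmetric.

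\textbf{Case B: the agent is the unique leftmost agent $x_i=l$.} Her cost is $C=(\frac12-\alpha)\cdot 2L$. Any report $x_i'$ yields a new profile with endpoints $l',r'$. I would split into the following subcases:
\begin{itemize}
\item[(i)] $x_i'\le l$ or $x_i'$ inside the remaining range: the identity still applies with $x_i\in[l',r']$. Moving $l$ left by $\Delta$ increases $L$ by $\Delta$ and moves $m$ away from $x_i$, so the cost rises. Reporting $x_i'\in(l,\,lm(\mathbf{x}_{-i})]$ shrinks the interval so that $x_i$ lies outside it.
\item[(ii)] When $x_i<l'$, the cost formula changes. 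The left far point $l'-cL'$ may or may not be to her right, so I must compute directly. For the agent outside the interval, every support point is at distance at least about $\mathrm{dist}$, and I would verify the bound $C'\ge (1-2\alpha)(m'-x_i)+2\alpha(\ldots)\ge(1-2\alpha)L$ using $m'-x_i\ge L'/2+(l'-x_i)$ and $r'=r$.
\end{itemize}
If several agents share a location, partial group strategyproofness follows from the same computation applied to the coalition's common location.

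The main obstacle is Case B(ii): an extreme agent reports inward so that she falls outside the reported interval, and the term $|x_i-(l'-cL')|$ loses its sign. I would handle it by writing $C'\ge(1-2\alpha)(m'-x_i)+\alpha(r-x_i+cL')+\alpha|l'-cL'-x_i|$, dropping the last term to $0$, and checking linearity in $l'\in(l,r]$. It suffices to check the endpoints $l'=l$ (equality) and $l'=r$ (cost $(1-2\alpha)L+\alpha L\ge(1-2\alpha)L$). Since the bound is concave/piecewise linear in $l'$, the bound holds throughout. Alternatively, since the statement is attributed to \cite{Cai2016} and the mechanism is identical (only the objective differs), I would cite their proof, as strategyproofness is independent of the objective.
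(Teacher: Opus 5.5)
The paper gives no proof of this lemma; it imports it from Cai et al. Your fallback of citing them is therefore exactly the paper's route, and it is legitimate: strategyproofness does not depend on the objective, and $\alpha>\frac{1}{4(1+\beta)}$ keeps all three outcomes feasible.

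Your self-contained argument, however, has a genuine gap in Case B(ii). Take $x_i=l$ the unique leftmost agent, $r'=r$, and $l'\in(l,r]$. After dropping $\alpha|l'-cL'-x_i|$ you are left with $B(l')=(1-2\alpha)\frac{l'+r-2l}{2}+\alpha\bigl(L+c(r-l')\bigr)$. This is linear in $l'$ with slope $\frac{1-2\alpha}{2}-\alpha c=\frac14$, and $B(l)=(\frac34-\alpha)L$, which is strictly below the truthful cost $(1-2\alpha)L$ whenever $\alpha<\frac14$. The claimed equality at $l'=l$ is false, because the dropped term equals $\alpha cL\neq 0$ there. So $B$ certifies nothing for $l'<l+(1-4\alpha)L$.

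No sharper estimate of the same lossy kind can rescue this. The exact cost is convex (not concave) in $l'$, and it is identically $(1-2\alpha)L$ on $(l,\,l+(1-4\alpha)L]$, so the agent is exactly indifferent there. You also fix $r'=r$, which omits the deviation where the unique leftmost agent reports beyond $r$ and both endpoints move. There are smaller slips in B(i) as well: a report inside the remaining range also puts $x_i$ outside $[l',r']$, and moving $l$ left moves $m$ toward $x_i$, not away. The conclusion of B(i) still follows from your Case A computation.

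The missing observation is that your closed form holds whenever $l'-cL'\le x_i\le r'+cL'$, not only for $x_i\in[l',r']$. On that range it equals $(1-2\alpha)\max\{|x_i-l'|,|x_i-r'|\}$.

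In the remaining case $x_i<l'-cL'$ (the other is symmetric), all three support points lie to the right of $x_i$, so the cost is exactly $m'-x_i$. Since $1+c=\frac{1}{4\alpha}$, the condition $x_i<l'-cL'$ is equivalent to $l'-x_i>(1-4\alpha)(r'-x_i)$. This gives $m'-x_i>(1-2\alpha)(r'-x_i)$.

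Hence, for any report, the cost is at least $(1-2\alpha)$ times the distance from $x_i$ to the farther reported endpoint. Equality holds under truthful reporting, where the cost equals $(1-2\alpha)\max_{j\neq i}|x_i-x_j|$. Because $[l',r']\supseteq[lm(\mathbf{x}_{-i}),rm(\mathbf{x}_{-i})]$, that farthest distance can only grow under any deviation. This settles every case at once, including $r'\neq r$, and shows why $L^{\alpha}$ is chosen as exactly $\frac{1-4\alpha}{4\alpha}L$.
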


\begin{theorem}
If $\beta\ge 1$, \textit{Mechanism $\displaystyle \frac{1}{2(1+\beta)}$-LRM} is a strategyproof mechanism with approximation ratio of $\displaystyle 1+\frac{2}{(1+\beta)^2}$ for the envy ratio.
\end{theorem}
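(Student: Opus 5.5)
The theorem should follow by plugging $\alpha = \frac{1}{2(1+\beta)}$ into the two preceding lemmas: the approximation-ratio lemma for \textit{Mechanism $\alpha$-LRM} and the strategyproofness lemma of \citet{Cai2016}. The plan has three steps.

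\textbf{Step 1 (admissibility of $\alpha$).} I would first check that this $\alpha$ lies in the range fixed in the Remark, namely $\left(\frac{1}{4(1+\beta)},\,1/4\right]$.
\begin{itemize}
\item The lower bound is automatic, since $\frac{1}{2(1+\beta)}>\frac{1}{4(1+\beta)}$.
\item The upper bound $\frac{1}{2(1+\beta)}\le 1/4$ is equivalent to $1+\beta\ge 2$, that is, $\beta\ge 1$. This is exactly where the hypothesis of the theorem is used.
\end{itemize}
With $\alpha$ in this range, $\alpha\in(0,1/4]$, so the definition of \textit{Mechanism $\alpha$-LRM} makes sense. Also $L^{\alpha}(\mathbf{x})=\frac{1-4\alpha}{4\alpha}L(\mathbf{x})\le\beta L(\mathbf{x})$, so both random outcomes $lm(\mathbf{x})-L^{\alpha}(\mathbf{x})$ and $rm(\mathbf{x})+L^{\alpha}(\mathbf{x})$ stay in the feasible interval $[lm(\mathbf{x})-\beta L(\mathbf{x}),\,rm(\mathbf{x})+\beta L(\mathbf{x})]$.

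\textbf{Step 2 (strategyproofness).} This follows immediately from the lemma of \citet{Cai2016}, which holds for every $\alpha\in(0,1/4]$.

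\textbf{Step 3 (approximation ratio).} I would substitute into the ratio $1+\frac{2\alpha}{1+\beta-1/(4\alpha)}$. With $\alpha=\frac{1}{2(1+\beta)}$ we get $1/(4\alpha)=(1+\beta)/2$, so the denominator is $1+\beta-\frac{1+\beta}{2}=\frac{1+\beta}{2}$. The ratio becomes
\[
1+\frac{1/(1+\beta)}{(1+\beta)/2}=1+\frac{2}{(1+\beta)^2}.
\]

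\textbf{Optional remark on the choice of $\alpha$.} One could also explain why this $\alpha$ is chosen. Writing $t=1/(4\alpha)$, the excess term is $\frac{1}{2t(1+\beta-t)}$. This is minimized at $t=(1+\beta)/2$, which is the stated $\alpha$. The constraint $t\ge 1$ is again equivalent to $\beta\ge1$.

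The only delicate point is Step 1: confirming feasibility within the relative interval and the Remark's range. The rest is substitution.
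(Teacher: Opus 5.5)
Your proposal is correct and follows the paper's own route: the paper gives no separate proof and treats the theorem as immediate from substituting $\alpha=\frac{1}{2(1+\beta)}$ into the approximation-ratio lemma for \textit{Mechanism $\alpha$-LRM} and invoking the cited strategyproofness lemma, with $\beta\ge 1$ ensuring $\alpha\le 1/4$. Your admissibility and feasibility check, and your optimization over $t=1/(4\alpha)$, spell out what the paper leaves implicit or puts in the remark after the theorem.
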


\begin{remark}
In the class of \textit{Mechanism $\alpha$-LRM}, if $\beta\le 1$, \textit{Mechanism 1/4-LRM} has the optimal approximation ratio of $\displaystyle1+1/(2\beta)$; if $\beta\ge 1$, \textit{Mechanism $\displaystyle\frac{1}{2(1+\beta)}$-LRM} has the optimal approximation ratio of $\displaystyle 1+\frac{2}{(1+\beta)^2}$.
\end{remark}

\subsection{Discussion}\label{discussion2}

 Our results for the deterministic case are completely tight. For the randomized case, there still exists a gap between the upper bound and the lower bound. Indeed, for $\beta=1$, the randomized upper bound given by \textit{Mechanism 1/4-GLRM} (or \textit{Mechanism 1/4-LRM}) is 5/4 and the randomized lower bound is 17/16. How to narrow the gap is an intriguing open problem.

 Moreover, the parameter $\beta$ needs not to be prespecified, which implies that the relative interval setting can be adjusted to various scenarios.

\section{Conclusion and Future Work}\label{future}

The \textit{envy ratio} is a natural fairness criterion adopted from the fair division literature \citep{Lipton2004}. We formulated a one-facility location game with the objective of minimizing the envy ratio, which extends the existing work on facility location games and fair division. We analyzed the problem in two settings where the facility location is restricted on a fixed interval or an interval related to the reported locations of the agents, both of which can apply in many real life scenarios.

For these two settings, we obtained the optimal solutions which are not strategyproof and the best deterministic strategyproof mechanisms. For the randomized strategyproof mechanisms, there exists a gap between the upper bound and the lower bound for the two settings. As we summarized in Section \ref{discussion1} and Section \ref{discussion2}, how to narrow the gap would be an interesting direction.

Our model can be naturally extended to the multiple facility location problem. Besides, it would be interesting to study domains where the space of locations is multi-dimensional Euclidean space, more general metric spaces, or other networks.

\bigskip\noindent\textbf{Declaration of Competing Interest}

\bigskip None.

\bigskip\noindent\textbf{Acknowledgements}

\bigskip This research was supported in part by the National Natural Science Foundation of China (11971447, 11871442), the Natural Science Foundation of Shandong Province of China (ZR2017MD011, ZR2019MA052) and the Fundamental Research Funds for the Central Universities (201964006).






\end{document}